\DeclareMathOperator{\vand}{Vand} 
\newcommand{\splitatcommas}[1]{%
	\begingroup
	\begingroup\lccode`~=`, \lowercase{\endgroup
		\edef~{\mathchar\the\mathcode`, \penalty0 \noexpand\hspace{0pt plus 1em}}%
	}\mathcode`,="8000 #1%
	\endgroup
}
\DeclareMathOperator{\rank}{rank}
\begin{document}
	
\mainmatter          

\title{On the Direct Construction of MDS and Near-MDS Matrices}

\author{~}
\authorrunning{~}

\institute{~}

\author{Kishan Chand Gupta\inst{1} \and
	Sumit Kumar Pandey\inst{2} \and
	Susanta Samanta\inst{3}}
\authorrunning{K. C. Gupta et al.}

\institute{Applied Statistics Unit, Indian Statistical Institute, \\ 203, B.T. Road, Kolkata-700108, INDIA. \\ \email{kishan@isical.ac.in} \and Computer Science and Engineering, Indian Institute of Technology Jammu,\\ Jagti, PO Nagrota, Jammu-181221, INDIA. \\ \email{emailpandey@gmail.com} 
\and Department of Electrical and Computer Engineering, University of Waterloo,\\
Waterloo, Ontario, N2L 3G1, Canada \\ 
\email{ssamanta@uwaterloo.ca}}
\maketitle

\begin{abstract}
	\sloppy
	The optimal branch number of MDS matrices makes them a preferred choice for designing diffusion layers in block ciphers and hash functions. Consequently, various methods have been proposed for designing MDS matrices, including search and direct methods. While exhaustive search is suitable for small-order MDS matrices, direct constructions are preferred for larger orders due to the vast search space involved. In the literature, there has been extensive research on the direct construction of MDS matrices using both recursive and nonrecursive methods.
	On the other hand, in lightweight cryptography, Near-MDS (NMDS) matrices with sub-optimal branch numbers offer a better balance between security and efficiency as a diffusion layer compared to MDS matrices. However, no direct construction method is available in the literature for constructing recursive NMDS matrices.
	This paper introduces some direct constructions of NMDS matrices in both nonrecursive and recursive settings. Additionally, it presents some direct constructions of nonrecursive MDS matrices from the generalized Vandermonde matrices. We propose a method for constructing involutory MDS and NMDS matrices using generalized Vandermonde matrices.  Furthermore, we prove some folklore results that are used in the literature related to the NMDS codes.


	\keywords{Diffusion Layer \and MDS matrix \and Near-MDS matrix \and Companion matrix \and Vandermonde matrix.}		
\end{abstract}

\section{Introduction}\label{Sec:Introduction}
The concept of confusion and diffusion, introduced by Shannon~\cite{SHANON}, is commonly employed in the design of symmetric key cryptographic primitives. Typically, the round function of such designs uses both non-linear and linear layers to achieve confusion and diffusion, respectively. 
The focus of this paper is on the construction of linear diffusion layers that maximize the spreading of internal dependencies.
One way to formalize the concept of perfect diffusion is through the use of multipermutations, which are introduced in \cite{MULT_PER1,MULT_PER2}. Another way to define it is using \textit{Maximum Distance Separable} (MDS) matrices~\cite{AES,AES_2020}. Due to the optimal branch number of MDS matrices, many block ciphers and hash functions use them in their diffusion layers.
\sloppy
In the literature, there has been extensive study of constructing MDS matrices, and we can categorize the approaches of constructing MDS matrices mainly in two ways: nonrecursive and recursive. In nonrecursive constructions, the constructed matrices are themselves MDS. Whereas in recursive constructions, we generally start with a sparse matrix $A$ of order $n$, with a proper choice of elements such that $A^n$ is an MDS matrix.

The advantage of \textit{recursive MDS} matrices is that they are particularly well suited for lightweight implementations: the diffusion layer can be implemented by recursively executing the implementation of the sparse matrix, which requires only a few clock cycles. Recursive MDS matrices based on the companion matrices were used in the PHOTON~\cite{PHOTON} family of hash functions and the LED block cipher~\cite{LED} because companion matrices can be implemented by a simple LFSR.

One can again classify the techniques used to construct MDS matrices based on whether the matrix is constructed directly or found via a search method by enumerating some search space. Exhaustive search works well for small matrices but becomes infeasible for larger orders or over large finite fields due to the rapid growth of the search space. Direct constructions, in contrast, can produce matrices of any order but do not guarantee the lowest implementation cost, which is an important factor in cryptographic applications. The lowest-cost matrices are often obtainable only through search-based methods, but their applicability is limited to small matrices. Direct constructions not only represent a practical option for constructing larger MDS matrices but also provide theoretical insights.

In the literature, there has been extensive research on the direct construction of MDS matrices using both recursive and nonrecursive methods. Nonrecursive direct constructions mainly rely on Cauchy and Vandermonde-based constructions \cite{MDS_Survey,GR13,LACAN2003,LACAN,HMC1,V_MDS}, while recursive direct constructions are obtained through certain coding-theoretic methods. Augot et al.~\cite{Augot2014} employed shortened BCH codes, and Berger~\cite{Berger2013} used Gabidulin codes in their method. Then, in a series of works~\cite{GuptaPV17_2,GuptaPV17_1,GuptaPV19}, the authors proposed many approaches for the construction of recursive MDS matrices from the companion matrices over finite fields.

\textit{Near-MDS} (NMDS) matrices have sub-optimal branch numbers, leading to a slower diffusion speed compared to MDS matrices. However, NMDS matrices can provide a more favorable trade-off between security and efficiency as a diffusion layer, when compared to MDS matrices. Despite their potential benefits, research on NMDS matrices has been limited in the literature, and there is currently no direct construction method available for them in a recursive approach. In 2017, Li et al.~\cite{Li_Wang_2017} explored the construction of NMDS matrices from circulant and Hadamard matrices. In~\cite{Recursive_nMDS_2021}, the focus is on studying the recursive NMDS matrices with the goal of achieving the lowest possible hardware cost. Additionally, recent studies such as~\cite{Huang2021,NMDS_code_2022,NMDS_code_2022_2} have presented direct constructions of NMDS codes, which can be utilized to derive nonrecursive NMDS matrices. In a more recent study~\cite{NMDS_2023}, Gupta et al. explored the construction of NMDS matrices in both recursive and nonrecursive settings and delved into the hardware efficiency of these construction techniques.

\paragraph{\textbf{Contributions:}} As a direct construction, this approach does not guarantee the achievement of lightweight MDS or Near-MDS (NMDS) matrices at the same level attained by the search-based method, whose applicability is constrained by both small matrix dimensions and finite field sizes. Nevertheless, as a novel approach, direct constructions offer valuable alternatives for MDS matrix design. Notably, they provide practical solutions for constructing larger MDS matrices while also delivering significant theoretical insights. This paper introduces several direct constructions for both MDS and NMDS matrices in nonrecursive and recursive frameworks. To clearly highlight our structural novelty, Table~\ref{tab:literature_comparison} provides a comprehensive comparison of our proposed methods against the existing literature. Specifically, this work advances the field through the following primary contributions:
\begin{itemize}
\setlength{\itemsep}{0.6em}
    \item[$\bullet$] We address the lack of direct constructions for recursive NMDS matrices by proposing new constructions based on companion matrices. Additionally, we introduce a new direct construction for recursive MDS matrices.
    \item[$\bullet$] We leverage generalized Vandermonde matrices as a tool for the direct, nonrecursive construction of MDS and NMDS matrices. More specifically, by exploiting their algebraic structure and determinant properties, we provide several direct constructions of both MDS and NMDS matrices over finite fields.
    \item[$\bullet$] Building on this framework, we develop new direct construction methods for involutory MDS matrices and, notably, present the first direct construction of involutory NMDS matrices over finite fields.
    \item[$\bullet$] Finally, we provide formal proofs for several folklore results commonly referenced in the NMDS codes literature.
\end{itemize}

\noindent This paper is structured as follows: Section~\ref{Section_Def_Direct_MDS_NMDS} provides an overview of the mathematical background and notations used throughout, along with proofs of several folklore results on NMDS codes. Section~\ref{Section_Rec_Direct_MDS_NMDS} details direct construction methods for recursive MDS and NMDS matrices, while Section~\ref{Section_nonre_Direct_MDS_NMDS} describes various direct construction methods for nonrecursive MDS and NMDS matrices. Finally, Section~\ref{Section_Conclusion_Direct_Cons} concludes the paper.

\section{Definitions and Preliminaries}\label{Section_Def_Direct_MDS_NMDS}
Let $\mathbb{F}_q$ be the finite field containing $q$ elements, where $q=p^r$ for some prime $p$ and a positive integer $r$. The set of vectors of length $n$ with entries from the finite field $\mathbb{F}_{q}$ is denoted by $\mathbb{F}_{q}^n$. Let $\mathbb{F}_q[x]$ denote the polynomial ring over $\mathbb{F}_q$ in the indeterminate $x$. We denote the algebraic closure of $\mathbb{F}_q$ by $\bar{\mathbb{F}}_q$ and the multiplicative group by $\mathbb{F}_q^{*}$.
It is a well-established fact that elements of a finite field with characteristic $p$ can be represented as vectors with coefficients in $\mathbb{F}_p$. In other words, there exists a vector space isomorphism from $\mathbb{F}_{p^r}$ to $\mathbb{F}_p^r$ defined by $x=(x_1\alpha_1+x_2\alpha_2+ \cdots +x_r\alpha_r) \rightarrow (x_1,x_2, \ldots,x_r)$, where $\{\alpha_1,\alpha_2,\ldots,\alpha_r\}$ is a basis of $\mathbb{F}_{p^r}$ over $\mathbb{F}_{p}$. If $\alpha$ is a primitive element of $\mathbb{F}_{p^r}$, every nonzero element of $\mathbb{F}_{p^r}$ can be expressed as a power of $\alpha$, i.e., $\mathbb{F}_{p^r}^{*}=\set{1,\alpha,\alpha
^2,\alpha^3,\ldots,\alpha^{p^r-2}}$. 

Let ${M}_{k\times n}(\mathbb{F}_q)$ denote the set of all matrices of size $k\times n$ over $\mathbb{F}_q$. For simplicity, we use ${M}_{n}(\mathbb{F}_q)$ to denote the ring of all $n\times n$ matrices (square matrices of order $n$) over $\mathbb{F}_q$. Let $I_{n}$ denote the identity matrix of ${M}_{n}(\mathbb{F}_q)$. The determinant of a matrix $A \in {M}_{n}(\mathbb{F}_q)$ is denoted by $\det(A)$. A square matrix $A$ is said to be nonsingular if  $\det(A)\neq 0$, or equivalently, if the rows (columns) of $A$ are linearly independent over $\mathbb{F}_q$. We now recall some concepts from coding theory. 

A \textit{linear code} $\mathcal{C}$ of length $n$ and dimension $k$ over $\mathbb{F}_q$ is denoted as an $[n,k]$ code. If the minimum distance of $\mathcal{C}$ is equal to $d$ then we denote it as an $[n,k,d]$ code. The \textit{dual code} $\mathcal{C}^{\perp}$ of a code $\mathcal{C}$ can be defined as a subspace of dimension $(n-k)$ that is orthogonal to $\mathcal{C}$. 

A \textit{generator matrix} of $\mathcal{C}$ over $\mathbb{F}_q$ is defined as a $k\times n$ matrix $G$ whose rows form a basis for $\mathcal{C}$. On the other hand, a \textit{parity check matrix} of $\mathcal{C}$ over $\mathbb{F}_q$ is an $(n-k)\times n$ matrix $H$ such that for every $c\in \mathbb{F}_q^n$, $c\in \mathcal{C} \iff Hc^{T}=\mathbf{0}.$ In other words, the code $\mathcal{C}$ is the kernel of $H$ in $\mathbb{F}_q^n$. A generator matrix $G$ is said to be in standard form if it has the form $G=[I_k~|~A]$, where $A$ is a $k \times (n-k)$ matrix. If $G=[I_k~|~A]$ is a generator matrix, then $H=[-A^{T}|I_{n-k}]$ is a parity check matrix for $\mathcal{C}$.

The following lemma establishes a connection between the properties of a parity check matrix and the minimum distance $d$ of a linear code $\mathcal{C}$.

\begin{lemma}\cite[page 33]{FJ77}\label{Lemma_min_dist_v_LI}
	Let $H$ be a parity check matrix of a code $\mathcal{C}$. Then the code has minimum distance $d$ if and only if 
	\begin{enumerate}[(i)]
        \itemsep0em
		\item any $d-1$ columns of $H$ are linearly independent,
		\item some $d$ columns are linearly dependent.
	\end{enumerate}
\end{lemma}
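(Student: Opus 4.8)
The plan is to translate the combinatorial statement about columns of $H$ into a statement about codeword weights, exploiting the defining relation $c \in \mathcal{C} \iff Hc^{T} = \mathbf{0}$. First I would recall the standard fact that for a \emph{linear} code the minimum distance equals the minimum Hamming weight of a nonzero codeword; this follows immediately from linearity, since $c - c' \in \mathcal{C}$ whenever $c, c' \in \mathcal{C}$, so $\min_{c \neq c'} d(c,c') = \min_{\mathbf{0} \neq c \in \mathcal{C}} \mathrm{wt}(c)$. Thus it suffices to characterize the minimum weight in terms of the columns of $H$.

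The bridge between the two viewpoints is the following observation. Writing $h_1, \ldots, h_n$ for the columns of $H$, the condition $Hc^{T} = \mathbf{0}$ reads $\sum_{i=1}^{n} c_i h_i = \mathbf{0}$. Hence a nonzero codeword $c$ of weight $w$, with support $S = \{i : c_i \neq 0\}$ of size $w$, yields precisely a nontrivial linear dependence $\sum_{i \in S} c_i h_i = \mathbf{0}$ among the $w$ columns indexed by $S$, all coefficients being nonzero. Conversely, any nontrivial dependence $\sum_{i \in S} \lambda_i h_i = \mathbf{0}$ with not all $\lambda_i$ zero produces the nonzero codeword obtained by placing $\lambda_i$ in position $i \in S$ and $0$ elsewhere, whose weight is at most $|S|$. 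I would record this as the core claim: the minimum weight of $\mathcal{C}$ equals the smallest size of a linearly dependent set of columns of $H$.

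With this in hand both directions are short. For the forward direction, assume the minimum distance is $d$. If some $d-1$ columns were linearly dependent, the correspondence would give a nonzero codeword of weight at most $d-1$, contradicting minimality; this proves (i). For (ii), a weight-$d$ codeword exists by the weight characterization, and its support exhibits $d$ linearly dependent columns. For the converse, assume (i) and (ii): by (i) no nonzero codeword can have weight at most $d-1$, since such a word would force $d-1$ or fewer dependent columns, so the minimum weight is at least $d$; by (ii) the dependence among $d$ columns yields a nonzero codeword of weight at most $d$, so the minimum weight is at most $d$. Together these give minimum weight exactly $d$, hence minimum distance $d$.

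The one point that needs care — and the only place a naive argument slips — is that a linear dependence among a fixed set of $d$ columns may actually involve fewer than $d$ of them (some $\lambda_i$ could vanish), so (ii) by itself only yields a codeword of weight \emph{at most} $d$, not exactly $d$. This is why the converse must combine (ii) with the lower bound from (i); I expect this interplay, rather than any single estimate, to be the main thing to get right.
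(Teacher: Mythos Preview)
Your proof is correct and is the standard textbook argument: reduce minimum distance to minimum weight via linearity, then use $Hc^{T}=\mathbf{0}$ to identify nonzero codewords of weight $w$ with nontrivial dependencies among $w$ columns of $H$. Your attention to the subtlety that a dependence among $d$ columns might have some zero coefficients, so (ii) alone only gives weight $\le d$, is exactly right and is the one point where careless proofs go wrong.

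As for comparison with the paper: the paper does not supply its own proof of this lemma. It is stated with a citation to MacWilliams--Sloane and used as a black box (for instance, inside the proof of Lemma~\ref{Lemma_three_conditions_NMDS}). Your argument is precisely the classical one found in that reference, so there is nothing further to contrast.
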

Constructing a linear code with large values of $k/n$ and $d/n$ is desirable in coding theory. However, there is a trade-off between the parameters $n, k$, and $d$. For instance, the well-known Singleton bound gives an upper bound on the minimum distance for a code.

\begin{theorem}(The Singleton bound)\cite[page 33]{FJ77}
	Let $\mathcal{C}$ be an $[n,k, d ]$ code. Then $d\leq n-k+1$.
\end{theorem}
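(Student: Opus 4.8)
The plan is to deduce the bound directly from the parity check characterization of the minimum distance in Lemma~\ref{Lemma_min_dist_v_LI}, which is presumably why that lemma is placed immediately before the theorem. Let $H$ be a parity check matrix of $C$. Since $\dim C = k$, the matrix $H$ has size $(n-k)\times n$, so each of its columns is a vector in $\mathbb{F}_q^{n-k}$. The argument will rest on comparing the number of guaranteed linearly independent columns against the dimension of the ambient space in which those columns live.

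By part (i) of Lemma~\ref{Lemma_min_dist_v_LI}, every choice of $d-1$ columns of $H$ is linearly independent. A collection of linearly independent vectors sitting inside $\mathbb{F}_q^{n-k}$ can contain at most $n-k$ vectors, since that is the dimension of the space. Hence $d-1$ cannot exceed $n-k$; that is, $d-1 \le n-k$, which rearranges to $d \le n-k+1$.

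I do not anticipate a genuine obstacle here, as the conclusion is essentially a one-line consequence of the lemma together with the fact that the column rank of $H$ is at most $n-k$. The only points to tidy up are the degenerate cases: if $d=1$ the bound is immediate, and if one feared that no set of $d-1$ independent columns existed, that would itself contradict part (i) of the lemma, so the counting step is always applicable.

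As a self-contained alternative not relying on $H$, one could argue by puncturing: deleting any fixed set of $d-1$ coordinates defines a map $\mathbb{F}_q^n \to \mathbb{F}_q^{n-d+1}$ whose restriction to $C$ is injective, because two codewords agreeing off those coordinates would differ in a word of weight at most $d-1 < d$ and must therefore coincide. Counting the images then yields $q^k \le q^{n-d+1}$, hence $k \le n-d+1$. Both routes reach the same inequality, but I would present the parity-check version, since it reuses the machinery already introduced in Lemma~\ref{Lemma_min_dist_v_LI}.
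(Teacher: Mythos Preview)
The paper does not supply its own proof of the Singleton bound; it simply quotes the result from \cite[page 33]{FJ77} and moves on. Your argument via Lemma~\ref{Lemma_min_dist_v_LI} is correct and is indeed the natural reason that lemma is recalled just beforehand: every $d-1$ columns of the $(n-k)\times n$ parity check matrix are independent, so $d-1\le n-k$. The alternative puncturing/injectivity argument you sketch is also valid and is in fact the version most textbooks (including the cited one) present; either route is acceptable here since there is no in-paper proof to match.
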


\begin{definition}(MDS code)
	A code with $d=n - k +1$ is called a maximum distance separable code or an MDS code in short.
\end{definition}

\begin{remark}\label{Remark_MDS_paritycheck}
    An $[n,k]$ MDS code is defined as having minimum distance of $n-k+1$. Thus, every set of $n-k$ columns of the parity check matrix is linearly independent.
\end{remark}

\begin{remark}\label{Remark_MDS_generator}
	Since the dual of an MDS code is again an MDS code~\cite[page 318]{FJ77}, any $k$ columns of the generator matrix are linearly independent.
\end{remark}

Now, we will briefly discuss another important class of linear codes which has found many applications in cryptography. In~\cite{DL95_nMDS}, the concept of Near-MDS codes is introduced as a relaxation of some constraints of the MDS codes. The widely used approach to defining Near-MDS codes is through generalized Hamming weights~\cite{GHamming_Wei91}.

\begin{definition}~\cite{GHamming_Wei91}
    Let $\mathcal{C}$ be an $[n,k]$ code with $\mathcal{D} \subset \mathcal{C}$ as a subcode of $\mathcal{C}$. The support of $\mathcal{D}$, denoted by $\chi(\mathcal{D})$, is the set of coordinate positions, where not all codewords of $\mathcal{D}$ have zero, i.e.,
    \begin{center}
        $\chi(\mathcal{D})=\set{i:\exists (x_1,x_2,\ldots,x_n)\in \mathcal{D}~\text{and}~x_i\neq 0}$.
    \end{center}
\end{definition}

\noindent Using the terminology, an $[n,k]$ code is a linear code of dimension $k$ and support size at most $n$. The rank of a vector space is its dimension, and we may use the terms rank and dimension interchangeably.

\begin{example}\label{Example_GHamming_weights}
    Let $\mathcal{C}$ be the linear code with a generator matrix
    \begin{center}
        $G=
        \begin{bmatrix}
            1 & 0 & 0 & 0 & 1 & 0 \\
            0 & 1 & 0 & 0 & 1 & 1 \\
            0 & 0 & 1 & 0 & 0 & 1
        \end{bmatrix}
        $.
    \end{center}
    Then $\chi(\mathcal{C})=\set{1,2,3,5,6}$ and $\chi(\mathcal{D})=\set{2,3,5,6}$ for the subcode $\mathcal{D}$ generated by the second and third rows of $G$.
\end{example}

\begin{definition}\label{Def_GHamming_weight}\cite{GHamming_Wei91}
    For a linear code $\mathcal{C}$, the $r$-th generalized Hamming weight, denoted as $d_r(\mathcal{C})$, is defined as the cardinality of the minimal support of an $r$-dimensional subcode of $\mathcal{C}$, where $1\leq r \leq k$, i.e.,
	\begin{align*}
		d_r(\mathcal{C})&=\min \set{|\chi(\mathcal{D})|:\mathcal{D}~\text{is a subcode of}~\mathcal{C}~\text{with rank}~r}.
	\end{align*}
\end{definition}

Note that $d_1(\mathcal{C})=d$ is the minimum distance of $\mathcal{C}$.

\begin{example}
    Consider the linear code $\mathcal{C}$ in Example~\ref{Example_GHamming_weights}. It is easy to check that $d_1(\mathcal{C})=2$. By determining the minimal support of all two-dimensional subspaces $\mathcal{D}\subset \mathcal{C}$, we get $d_2(\mathcal{C})=4$. Also, there is at least one codeword in $\mathcal{C}$ with a 1 in each position except the fourth position, which implies that $d_3(\mathcal{C})=5$.
\end{example}

\begin{theorem}(Monotonicity)\label{Th_monotonocity_GHamming}~\cite{GHamming_Wei91}
    For any $[n,k,d]$ code, we have
    \begin{center}
        $1\leq d_1(\mathcal{C})=d< d_2(\mathcal{C})< d_3(\mathcal{C}) \cdots <d_k(\mathcal{C})\leq n$.
    \end{center}
\end{theorem}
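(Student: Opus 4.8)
The plan is to reduce the whole statement to the single strict inequality $d_r(\mathcal{C}) < d_{r+1}(\mathcal{C})$ for each $1 \le r < k$, since the two endpoint bounds are immediate. For the lower endpoint, any nonzero codeword occupies at least one coordinate, so $d_1(\mathcal{C}) \ge 1$, while the identification $d_1(\mathcal{C}) = d$ is the remark recorded just after Definition~\ref{Def_GHamming_weight}. For the upper endpoint, the support of any subcode is by definition a subset of $\{1, \dots, n\}$, so taking $\mathcal{D} = \mathcal{C}$ gives $d_k(\mathcal{C}) \le |\chi(\mathcal{C})| \le n$.

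The core step is the strict increase. I would fix $r$ with $1 \le r < k$ and let $\mathcal{D} \subseteq \mathcal{C}$ be an $(r+1)$-dimensional subcode realizing the minimum, so that $|\chi(\mathcal{D})| = d_{r+1}(\mathcal{C})$. Since $\mathcal{D}$ is nonzero, its support is nonempty; I then choose any coordinate $i \in \chi(\mathcal{D})$ and consider the coordinate evaluation $\varphi_i : \mathcal{D} \to \mathbb{F}_q$ given by $\varphi_i(c) = c_i$, a linear functional on $\mathcal{D}$. Because $i \in \chi(\mathcal{D})$, there is a codeword of $\mathcal{D}$ with nonzero $i$-th entry, so $\varphi_i$ is not identically zero, and hence its kernel $\mathcal{D}' = \ker \varphi_i = \{c \in \mathcal{D} : c_i = 0\}$ is a subspace of $\mathcal{D}$ of dimension exactly $(r+1) - 1 = r$.

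Next I would compare the supports. Every codeword of $\mathcal{D}'$ vanishes in position $i$, so $i \notin \chi(\mathcal{D}')$; and since $\mathcal{D}' \subseteq \mathcal{D}$ we have $\chi(\mathcal{D}') \subseteq \chi(\mathcal{D})$. Together these give $\chi(\mathcal{D}') \subseteq \chi(\mathcal{D}) \setminus \{i\}$, whence $|\chi(\mathcal{D}')| \le |\chi(\mathcal{D})| - 1$. As $\mathcal{D}'$ is an $r$-dimensional subcode of $\mathcal{C}$, the definition of $d_r$ yields $d_r(\mathcal{C}) \le |\chi(\mathcal{D}')| \le d_{r+1}(\mathcal{C}) - 1 < d_{r+1}(\mathcal{C})$, which is the desired strict inequality. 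Chaining this over $r = 1, \dots, k-1$ and appending the two endpoint bounds completes the argument.

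I do not expect a serious obstacle; the only point requiring care is the dimension count for the kernel, which rests on $\varphi_i$ being a nonzero functional — guaranteed precisely by the choice $i \in \chi(\mathcal{D})$. One could alternatively phrase the reduction through a rank or parity-check argument, but the kernel-of-a-coordinate approach keeps the support bookkeeping transparent and makes the strictness transparent as well.
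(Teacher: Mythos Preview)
Your proof is correct and is in fact the classical argument due to Wei. Note, however, that the paper does not supply its own proof of this theorem: it is stated with a citation to~\cite{GHamming_Wei91} and used as a background result, so there is nothing in the paper to compare against. Your kernel-of-a-coordinate argument is exactly the standard one from that reference, and all the details (nonzero functional, dimension drop by one, support shrinking by at least one) are handled cleanly.
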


\begin{corollary}(Generalized Singleton bound)~\cite{GHamming_Wei91}
    For an $[n,k]$ code $\mathcal{C}$, $d_r(\mathcal{C})\leq n-k+r$. (When $r=1$, this is the Singleton bound.)
\end{corollary}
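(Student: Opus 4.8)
The plan is to derive this directly from the Monotonicity theorem (Theorem~\ref{Th_monotonocity_GHamming}), which is already available. That theorem tells us that the generalized Hamming weights satisfy
\begin{align*}
d_1(\mathcal{C}) < d_2(\mathcal{C}) < \cdots < d_k(\mathcal{C}) \leq n.
\end{align*}
The crucial observation is that each $d_i(\mathcal{C})$ is a positive integer (it is the cardinality of a support set), so a strict inequality between consecutive terms forces a gap of at least one. First I would make this explicit: from $d_i(\mathcal{C}) < d_{i+1}(\mathcal{C})$ and integrality we get $d_{i+1}(\mathcal{C}) \geq d_i(\mathcal{C}) + 1$ for every $i$ with $1 \leq i \leq k-1$.

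Next I would telescope these consecutive gaps up to the top weight $d_k(\mathcal{C})$. Summing the $(k-r)$ inequalities $d_{i+1}(\mathcal{C}) \geq d_i(\mathcal{C}) + 1$ for $i = r, r+1, \ldots, k-1$ yields
\begin{align*}
d_k(\mathcal{C}) \geq d_r(\mathcal{C}) + (k-r).
\end{align*}
Combining this with the upper bound $d_k(\mathcal{C}) \leq n$ from the Monotonicity theorem gives $d_r(\mathcal{C}) + (k-r) \leq n$, and rearranging produces exactly $d_r(\mathcal{C}) \leq n - k + r$, as claimed. The remark that the $r=1$ case recovers the ordinary Singleton bound then follows because $d_1(\mathcal{C}) = d$ by the note immediately after Definition~\ref{Def_GHamming_weight}.

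There is no real obstacle here: the statement is a short corollary, and the entire content rests on translating ``strictly increasing'' into ``increasing by at least one per step'' for an integer-valued sequence and then telescoping to the largest weight $d_k(\mathcal{C})$, whose value is capped by $n$. The only point requiring the tiniest care is justifying that each $d_i(\mathcal{C})$ is an integer, which is immediate from its definition as the size of a minimal support. I would present the argument in two or three lines rather than as a formal induction, since the telescoping sum is the cleanest route.
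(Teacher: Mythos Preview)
Your proposal is correct and matches the paper's intent: the paper states this result as an immediate corollary of the Monotonicity theorem without writing out a proof, and your telescoping argument from the strict integer inequalities $d_r(\mathcal{C}) < d_{r+1}(\mathcal{C}) < \cdots < d_k(\mathcal{C}) \leq n$ is precisely the standard derivation that this positioning implies.
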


\noindent Theorem~\ref{Th_Wei91_Th2} provides another method to compute the generalized Hamming weight of a linear code. Let $H$ be a parity check matrix of $\mathcal{C}$ and let $H_i$, $1\leq i \leq n$, be its $i$-th column vector. For any subset of indices $I \subseteq \{1, \dots, n\}$, let $<H_i:i\in I>$ be the space generated by the column vectors $H_i$ for $i\in I$.

\begin{theorem}\label{Th_Wei91_Th2}\cite{GHamming_Wei91}
    For all $r\leq k$,
    \begin{center}
        $d_r(\mathcal{C})=\min \set{|I|:|I|-\rank(<H_i:i\in I>)\geq r}$.
    \end{center}
\end{theorem}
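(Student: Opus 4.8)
The plan is to interpret the quantity $|I| - \rank(\langle H_i : i \in I\rangle)$ as the dimension of a natural subcode, and then to sandwich $d_r(\mathcal{C})$ between the two sides of the claimed identity by a pair of inequalities. For a subset $I \subseteq \{1,\dots,n\}$, let $\mathcal{C}(I)$ denote the set of codewords $c \in \mathcal{C}$ whose support is contained in $I$, that is, $c_j = 0$ for all $j \notin I$. This is clearly a subcode of $\mathcal{C}$.

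First I would establish the central identity
\begin{center}
$\dim \mathcal{C}(I) = |I| - \rank(\langle H_i : i \in I\rangle)$.
\end{center}
To see this, consider the linear map $\phi_I : \mathbb{F}_q^{|I|} \to \mathbb{F}_q^{n-k}$ sending $(c_i)_{i\in I}$ to $\sum_{i\in I} c_i H_i$. A vector supported on $I$ lies in $\mathcal{C}$ exactly when $Hc^{T}=\mathbf{0}$, which is precisely the condition $\phi_I((c_i)_{i \in I}) = \mathbf{0}$; hence $\mathcal{C}(I)$ is isomorphic to $\ker \phi_I$ under extension by zeros. The image of $\phi_I$ is exactly $\langle H_i : i \in I\rangle$, so rank-nullity gives $\dim \mathcal{C}(I)=\dim \ker \phi_I = |I| - \rank(\langle H_i : i \in I\rangle)$.

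With this identity in hand, write $\nu_r = \min\{|I| : |I| - \rank(\langle H_i : i\in I\rangle) \geq r\}$, and note that the set is nonempty because $I = \{1,\dots,n\}$ yields $n - \rank(H) = n - (n-k) = k \geq r$. For the inequality $\nu_r \leq d_r(\mathcal{C})$, I would take an $r$-dimensional subcode $\mathcal{D}$ realizing $d_r(\mathcal{C})$ and set $I = \chi(\mathcal{D})$; since every codeword of $\mathcal{D}$ is supported on $I$, we have $\mathcal{D}\subseteq \mathcal{C}(I)$, whence $r \leq \dim \mathcal{C}(I) = |I| - \rank(\langle H_i : i\in I\rangle)$, so this particular $I$ is admissible in the minimum defining $\nu_r$ and thus $\nu_r \leq |I| = d_r(\mathcal{C})$. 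For the reverse inequality, take $I$ attaining $\nu_r$; then $\dim \mathcal{C}(I) = |I| - \rank(\langle H_i : i \in I\rangle) \geq r$, so $\mathcal{C}(I)$ contains an $r$-dimensional subcode $\mathcal{D}$, and $\chi(\mathcal{D}) \subseteq I$ gives $d_r(\mathcal{C}) \leq |\chi(\mathcal{D})| \leq |I| = \nu_r$.

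The two inequalities together give $d_r(\mathcal{C}) = \nu_r$, as claimed. I expect the only real content to be the dimension identity for $\mathcal{C}(I)$; once it is phrased correctly via the restricted syndrome map $\phi_I$ and rank-nullity, the remainder is a routine minimum-over-supports argument. The one point to handle with care is the direction of the support containments, since $\mathcal{D}\subseteq\mathcal{C}(I)$ only forces $\chi(\mathcal{D})\subseteq I$ rather than equality; but because both inequalities bound cardinalities of supports from above, this asymmetry causes no difficulty.
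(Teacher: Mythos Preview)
Your proof is correct and is essentially the standard argument (it is in fact the proof given in Wei's original paper). Note, however, that the present paper does not supply its own proof of this statement: Theorem~\ref{Th_Wei91_Th2} is merely cited from~\cite{GHamming_Wei91} and then used. The paper does introduce, in the proof of the subsequent Theorem~\ref{Th_GHamming_paritycheck}, exactly your subcode $\mathcal{C}(I)$ under the name $S^{\perp}(I)$ and asserts the identity $\rank(S(I))+\rank(S^{\perp}(I))=|I|$ without justification; your rank--nullity argument via the restricted syndrome map $\phi_I$ is precisely what fills that gap as well.
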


\noindent The following theorem establishes a connection between the properties of a parity check matrix and the generalized Hamming weight of a linear code $\mathcal{C}$. Although this theorem is well-known, we have not found its proof, so we are providing it below.

\begin{theorem}\label{Th_GHamming_paritycheck}\cite{DL95_nMDS,GHamming_Wei91}
    Let $H$ be a parity check matrix for a linear code $\mathcal{C}$. Then $d_r(\mathcal{C})=\delta$ if and only if the following conditions hold:
    \begin{enumerate}[(i)]
        \item any $\delta-1$ columns of $H$ have rank at least $\delta-r$,
        \item there exist $\delta$ columns of $H$ with rank $\delta-r$.
    \end{enumerate}
\end{theorem}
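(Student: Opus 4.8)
The plan is to deduce everything from the characterization
$$d_r(\mathcal{C})=\min \set{|I|:|I|-\rank(\langle H_i:i\in I\rangle)\geq r}$$
of Theorem~\ref{Th_Wei91_Th2}. Writing $\rho(I)=\rank(\langle H_i:i\in I\rangle)$ for brevity, the first thing I would record is a dictionary translating the two conditions into statements about the quantity $|I|-\rho(I)$. Condition (i), that every $(\delta-1)$-subset $I$ has $\rho(I)\geq \delta-r$, is exactly the assertion that $|I|-\rho(I)\leq r-1<r$ for all $|I|=\delta-1$; condition (ii), that some $\delta$ columns have rank $\delta-r$, exhibits a set $I_0$ with $|I_0|=\delta$ and $|I_0|-\rho(I_0)=r$. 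Thus the whole theorem reduces to showing that the minimum of $|I|$ over $\set{I:|I|-\rho(I)\ge r}$ equals $\delta$ if and only if (i) and (ii) hold. I will also use two elementary facts about rank repeatedly: removing one column decreases $\rho$ by at most $1$, and adjoining one column increases $\rho$ by at most $1$.

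For the forward direction, assume $d_r(\mathcal{C})=\delta$. Minimality yields a set $I_0$ with $|I_0|=\delta$ and $|I_0|-\rho(I_0)\ge r$, so $\rho(I_0)\le \delta-r$. To obtain (ii) I would argue that equality must hold: if $\rho(I_0)=\delta-r-s$ with $s\ge 1$, then the $\delta$ columns are linearly dependent, so some column lies in the span of the others and can be deleted without changing $\rho$; the resulting set $I_1$ has $|I_1|=\delta-1$ and $|I_1|-\rho(I_1)=r+s-1\ge r$, contradicting the minimality of $\delta$. Hence $\rho(I_0)=\delta-r$, which is (ii). Condition (i) is then immediate: any $I$ with $|I|=\delta-1$ satisfies $|I|<\delta$, so by minimality $|I|-\rho(I)<r$, i.e.\ $\rho(I)\ge \delta-r$.

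For the converse, assume (i) and (ii). Condition (ii) supplies a set of size $\delta$ with $|I_0|-\rho(I_0)=r\ge r$, so the minimum is at most $\delta$. It remains to show that no set of size at most $\delta-1$ lies in the family, i.e.\ that $|I|-\rho(I)<r$ whenever $|I|\le \delta-1$. For $|I|=\delta-1$ this is precisely (i). For a smaller set $I$ with $|I|=m<\delta-1$, I would extend $I$ to a $(\delta-1)$-subset $I'$ by adjoining $\delta-1-m$ further columns (possible since (ii) forces $n\ge\delta$); then $\rho(I')\le \rho(I)+(\delta-1-m)$, while (i) gives $\rho(I')\ge\delta-r$, and combining these yields $\rho(I)\ge m+1-r$, hence $|I|-\rho(I)\le r-1<r$. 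Thus $\delta$ is indeed the minimum, so $d_r(\mathcal{C})=\delta$.

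The computations are routine; the only genuinely delicate point is the extension step in the converse, where condition (i)---stated only for sets of size exactly $\delta-1$---must be propagated to all smaller sets. The key is that (i) constrains \emph{every} $(\delta-1)$-subset, so one is free to enlarge an arbitrary small set to size $\delta-1$ in any way, and the subadditivity of rank under adjoining columns then transfers the bound downward. I would also double-check the edge case $r=1$ (which should recover Lemma~\ref{Lemma_min_dist_v_LI}) and the degenerate possibility of repeated or zero columns, but neither affects the rank inequalities used above.
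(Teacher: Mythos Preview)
Your proof is correct and slightly more streamlined than the paper's. Both proofs lean on Theorem~\ref{Th_Wei91_Th2}, and your converse argument---adjoining columns to reach size $\delta-1$ and using subadditivity of rank---is essentially what the paper does as well. The forward direction is where the two diverge. For condition~(ii) the paper takes an $r$-dimensional subcode $\mathcal{D}$ with $|\chi(\mathcal{D})|=\delta$, sets $I=\chi(\mathcal{D})$, argues $\mathcal{D}\subseteq S^{\perp}(I)$, and then invokes the monotonicity of the generalized Hamming weights (Theorem~\ref{Th_monotonocity_GHamming}) to force $\mathcal{D}=S^{\perp}(I)$, hence $\rho(I)=\delta-r$; for condition~(i) it again argues via $S^{\perp}(I)$ and the subcode definition of $d_r$. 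You bypass all of this by working purely inside the minimization characterization: a minimizer $I_0$ must satisfy $\rho(I_0)=\delta-r$ exactly, because strict inequality would let you delete a redundant column and beat the minimum. Your route is shorter and avoids the extra ingredient of monotonicity; the paper's route is a little heavier but yields the side information that a minimum-support subcode coincides with $S^{\perp}$ of its support.
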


\begin{proof}
    For any $I\subset \set{1,2,\ldots,n}$, let $S(I)=<H_i:i\in I>$ be the space spanned by the vectors $H_i$ for $i\in I$, where $H_i$ denotes the $i$-th column of the parity check matrix $H$ of $\mathcal{C}$. Let 
    \begin{align*}
        S^{\perp}(I)=\set{x\in \mathcal{C}:x_i=0~\text{for}~i \not \in I~\text{and}~ \sum_{i\in I}x_iH_i=\mathbf{0}}.
    \end{align*}
    Then $\rank(S(I))+\rank(S^{\perp}(I))=|I|$.

    Let $d_r(\mathcal{C})=\delta$, and we will prove that both conditions hold. Suppose for contradiction that there exist $\delta-1$ columns of $H$, say $H_{i_1},H_{i_2},\ldots,H_{i_{\delta-1}}$, with rank $\leq \delta-r-1$.

    Now, let $I=\set{i_{1},i_{2},\ldots,i_{\delta-1}}\subset \set{1,2, \ldots, n}$. Then rank$(S(I))\leq \delta-r-1$. Thus, we have
    \begin{align*}
        \rank(S^{\perp}(I))&= |I|-\rank(S(I))\\
        & \geq \delta-1-(\delta-r-1)= r.
    \end{align*}
    Therefore, we have $\rank(S^{\perp}(I))\geq r$. Also, by construction, $S^{\perp}(I)$ is a subcode of $\mathcal{C}$ and $|\chi(S^{\perp}(I))|\leq \delta-1$. This leads to a contradiction since $d_r(\mathcal{C})=\delta$. Therefore, we can conclude that any $\delta-1$ columns of $H$ have rank greater than or equal to $\delta-r$.

    Since $d_r(\mathcal{C})=\delta$, there exists a subcode $\mathcal{D}$ of $\mathcal{C}$ with $\rank(\mathcal{D})=r$ and $|\chi(\mathcal{D})|=d_r(\mathcal{C})$. Let $I=\chi(\mathcal{D})$. Now, we will show that $\mathcal{D}=S^{\perp}(I)$.

    Let $c=(c_1,c_2,\ldots,c_n)\in \mathcal{D}$ be a codeword. Then we have
    \begin{align*}
        & \sum_{i=1}^{n}c_iH_i=\mathbf{0}\\
        \implies & \sum_{i\in I}c_iH_i+\sum_{i \not \in I}c_iH_i=\mathbf{0}\\
        \implies & \sum_{i\in I}c_iH_i=\mathbf{0}~~ [\text{Since}~c_i=0 ~\forall i \not \in I=\chi(\mathcal{D})]\\
        \implies & c\in S^{\perp}(I)\\
        \implies & \mathcal{D}\subset S^{\perp}(I).
    \end{align*}

    If possible, let $\rank(S^{\perp}(I))=r'>r$. Now, since $\rank(S(I))+\rank(S^{\perp}(I))=|I|$, we have 
    \begin{align*}
        &|I|-\rank(S(I))=r'>r\\
        \implies & d_{r'}(\mathcal{C})\leq |I|=\delta ~~ [\text{By Theorem~\ref{Th_Wei91_Th2}}].
    \end{align*}
    But by the monotonicity of generalized Hamming weights we must have 
    \begin{align*}
        \delta=d_{r}(\mathcal{C})<d_{r'}(\mathcal{C})\leq \delta,
    \end{align*}
    which is a contradiction. Hence, we must have $\rank(\mathcal{D})=\rank(S^{\perp}(I))$ and $\mathcal{D}=S^{\perp}(I)$. Thus, 
    \begin{align*}
        &\rank(S(I))=|I|-r=\delta-r.
    \end{align*}
    Therefore, there exist $\delta$ columns in $H$ of rank $\delta-r$.

    For the converse part, assume that both conditions hold. From condition $(ii)$, we know that there exist some $I\subset \set{1,2,\ldots,n}$ with $|I|=\delta$ such that $\rank(S(I))=\delta-r$. This implies that

    \begin{center}
        $\rank(S^{\perp}(I))=|I|-\rank(S(I))=r$.
    \end{center}

    Since $|I|-\rank(S(I))=r$, by Theorem~\ref{Th_Wei91_Th2}, we have $d_r(\mathcal{C})\leq \delta$.

    If possible, let $d_r(\mathcal{C})=\delta-t$ for some $t\geq 1$. Now, by Theorem~\ref{Th_Wei91_Th2}, there exist some $I'\subset \set{1,2,\ldots,n}$ with $|I'|=\delta-t$ such that
    \begin{align*}
        &|I'|-\rank(S(I'))\geq r\\
        \implies & \rank(S(I'))\leq |I'|-r\\
        \implies & \rank(S(I'))\leq \delta-t-r.
    \end{align*}
    Therefore, there exist $|I'|=\delta-t$ columns, say $H_{i_1},H_{i_2},\ldots,H_{i_{\delta-t}}$, of $H$ of rank $\leq \delta-t-r$. Now, by adding any other $t-1$ columns of $H$ to those $\delta-t$ columns we have $\delta-1$ columns, say $H_{i_1},H_{i_2},\ldots,H_{i_{\delta-t}},H_{i_{\delta-t+1}},\ldots,H_{i_{\delta-1}}$, of $H$ of rank $\leq (\delta-t-r)+(t-1)=\delta-r-1<\delta-r$. This contradicts condition $(i)$. Hence, we must have $d_r(\mathcal{C})=\delta$.\qed
\end{proof}

\begin{definition}(NMDS code)\label{Def_NMDS_code}\cite{DL95_nMDS}
    An $[n, k]$ code $\mathcal{C}$ is said to be Near-MDS or NMDS if 
    \begin{align*}
        &d_1(\mathcal{C})=n-k~~\text{and}~~d_i(\mathcal{C})=n-k+i,~~\text{for}~i=2,3,\ldots,k.
    \end{align*}
\end{definition}

\begin{remark}
    From the monotonicity of generalized Hamming weights, we can say that an $[n, k]$ code is NMDS if and only if $d_1(\mathcal{C})=n-k~\text{and}~d_2(\mathcal{C})=n-k+2$.
\end{remark}

\begin{remark}\label{Remark_AMDS_is_not_NMDS}
	For an $[n,k,d]$ code $\mathcal{C}$, if $d=n-k$, then $\mathcal{C}$ is called an Almost-MDS or AMDS code. However, it is worth noting that not all AMDS codes are necessarily NMDS codes. For example, consider the linear code $\mathcal{C}$ with a generator matrix
    \begin{align*}
        G&=\begin{bmatrix}
            1 & 0 & 0 & \alpha^2 & \alpha & 0 \\
            0 & 1 & 0 & \alpha & \alpha & 0 \\
            0 & 0 & 1 & \alpha & 0 & \alpha
        \end{bmatrix}
    \end{align*}
    over the finite field $\mathbb{F}_{2^2}$ constructed by the polynomial $x^2 + x + 1$ and $\alpha$ is a root of $x^2 + x + 1$. Then it can be checked that $\mathcal{C}$ is a $[6,3,3]$ code. Also, by determining the minimal support of all two-dimensional subspaces $\mathcal{D}\subset \mathcal{C}$, we get $d_2(\mathcal{C})=4<5$. This value is achieved by the subspace spanned by the first two rows of the generator matrix $G$. Hence, $\mathcal{C}$ is not an NMDS code. 
	
	However, when both $\mathcal{C}$ and its dual $\mathcal{C}^{\perp}$ are AMDS codes, then $\mathcal{C}$ is classified as an NMDS code~\cite{Boer1996_AMDS}.
\end{remark}

\noindent Theorem~\ref{Th_GHamming_paritycheck} provides the following useful result on NMDS codes.

\begin{lemma}\label{Lemma_three_conditions_NMDS}\cite{DL95_nMDS}
    Let $H$ be a parity check matrix of an $[n,k]$ code $\mathcal{C}$. Then the code $\mathcal{C}$ is NMDS if and only if $H$ satisfies the conditions
	\begin{enumerate}[(i)]
		\item any $n-k-1$ columns of $H$ are linearly independent,
		\item there exist some $n-k$ columns that are linearly dependent,
		\item any $n-k+1$ columns of $H$ are of full rank.
	\end{enumerate}
\end{lemma}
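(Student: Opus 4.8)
The plan is to reduce the whole statement to Theorem~\ref{Th_GHamming_paritycheck} applied at the two indices $r=1$ and $r=2$, using the observation (the remark following Definition~\ref{Def_NMDS_code}) that $\mathcal{C}$ is NMDS if and only if $d_1(\mathcal{C})=n-k$ and $d_2(\mathcal{C})=n-k+2$. The key structural fact I would exploit throughout is that $H$ is an $(n-k)\times n$ matrix, so any set of its columns has rank at most $n-k$; this ceiling is exactly what converts the rank inequalities of Theorem~\ref{Th_GHamming_paritycheck} into the sharp linear-independence and full-rank conditions listed in the lemma.

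For the forward implication, I would assume $\mathcal{C}$ is NMDS and instantiate Theorem~\ref{Th_GHamming_paritycheck} twice. At $(r,\delta)=(1,n-k)$, its part (i) says every $n-k-1$ columns have rank $\ge n-k-1$; since $n-k-1$ columns can have rank at most $n-k-1$, they must be linearly independent, which is condition (i) of the lemma. Its part (ii) produces $n-k$ columns of rank $n-k-1<n-k$, i.e.\ linearly dependent, which is condition (ii). At $(r,\delta)=(2,n-k+2)$, part (i) says every $n-k+1$ columns have rank $\ge n-k$, and the rank ceiling forces full rank, which is condition (iii).

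For the converse I would run the same correspondence backwards. Conditions (i) and (ii) of the lemma reproduce the two hypotheses of Theorem~\ref{Th_GHamming_paritycheck} at $(r,\delta)=(1,n-k)$ --- here I would note that (i) guarantees any $n-k-1$ of the dependent columns in (ii) are independent, so those $n-k$ columns have rank exactly $n-k-1$ --- yielding $d_1(\mathcal{C})=n-k$. Condition (iii) reproduces part (i) of the theorem at $(r,\delta)=(2,n-k+2)$.

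The main obstacle is that the theorem also demands a \emph{second} hypothesis at $r=2$, namely the existence of $n-k+2$ columns of rank $n-k$, which is not among the three listed conditions; the crux of the converse is to show it comes for free. I would argue: assuming $k\ge 2$ (so that $n-k+2\le n$ and $d_2$ is defined), take any $n-k+2$ columns of $H$; by condition (iii) some $n-k+1$ of them already span a space of dimension $n-k$, so the full set has rank $\ge n-k$, and the ceiling rank $\le n-k$ forces equality. Hence the existence clause holds automatically, Theorem~\ref{Th_GHamming_paritycheck} gives $d_2(\mathcal{C})=n-k+2$, and together with $d_1(\mathcal{C})=n-k$ the code is NMDS.
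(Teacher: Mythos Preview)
Your proposal is correct and follows essentially the same approach as the paper's proof: both reduce the lemma to the characterization $d_1(\mathcal{C})=n-k$ and $d_2(\mathcal{C})=n-k+2$, handle $d_2$ via Theorem~\ref{Th_GHamming_paritycheck}, and dispose of the existence clause for $r=2$ using the rank ceiling $\rank(H)=n-k$. The only cosmetic difference is that the paper invokes Lemma~\ref{Lemma_min_dist_v_LI} for $d_1$ where you invoke Theorem~\ref{Th_GHamming_paritycheck} at $r=1$, which amounts to the same thing.
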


\begin{proof}
    Let $\mathcal{C}$ be an NMDS code. Therefore, we have $d_1=n-k$ and $d_2=n-k+2$. Since $d_1$ is the minimum distance of $\mathcal{C}$, from Lemma~\ref{Lemma_min_dist_v_LI}, we can say that $d_1=n-k$ if and only if any $n-k-1$ columns of $H$ are linearly independent and there exist some $n-k$ columns that are linearly dependent. Moreover, Theorem~\ref{Th_GHamming_paritycheck} implies that $d_2=n-k+2$ if and only if any $n-k+1$ columns of $H$ have rank greater than or equal to $(n-k+2)-2=n-k$ and there exist $n-k+2$ columns of $H$ of rank $(n-k+2)-2=n-k$. Since $H$ is a parity check matrix of $\mathcal{C}$, we have $\rank(H)=n-k$. Therefore, we can conclude that $d_2=n-k+2$ if and only if any $n-k+1$ columns of $H$ are of full rank. This completes the proof.\qed
\end{proof}

\noindent It can be deduced from the properties of the generalized Hamming weights that the dual of an NMDS code is also an NMDS code.

\begin{lemma}\label{Lemma_dual_NMDS}\cite{DL95_nMDS}
    If an $[n,k]$ code is NMDS, then its dual code is also NMDS.
\end{lemma}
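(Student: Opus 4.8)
The plan is to reduce the claim to the column-rank characterisation of NMDS codes already proved in Lemma~\ref{Lemma_three_conditions_NMDS}. Since $\dim\mathcal{C}^{\perp}=n-k$ and the generator matrix $G$ of $\mathcal{C}$ satisfies $Gx^{T}=\mathbf{0}\iff x\in\mathcal{C}^{\perp}$, the matrix $G$ (of size $k\times n$ and rank $k$) is a parity check matrix of $\mathcal{C}^{\perp}$. Applying Lemma~\ref{Lemma_three_conditions_NMDS} to $\mathcal{C}^{\perp}$ — with the role of $n-k$ now played by $n-(n-k)=k$ — it therefore suffices to show that $G$ satisfies: (i$'$) every $k-1$ columns of $G$ are linearly independent, (ii$'$) some $k$ columns of $G$ are linearly dependent, and (iii$'$) any $k+1$ columns of $G$ are of full rank $k$. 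Throughout, $H$ denotes a fixed parity check matrix of $\mathcal{C}$, which by hypothesis and Lemma~\ref{Lemma_three_conditions_NMDS} satisfies the corresponding conditions (i), (ii), (iii) with parameter $n-k$.

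The heart of the argument is a duality identity relating the column ranks of $G$ and of $H$. For $T\subseteq\{1,\dots,n\}$ write $\bar T=\{1,\dots,n\}\setminus T$, and let $\rank_G(T)$ (resp. $\rank_H(T)$) be the rank of the columns of $G$ (resp. $H$) indexed by $T$. I claim that
\[
\rank_G(T)=|T|+\rank_H(\bar T)-(n-k).
\]
To prove this I would first observe the elementary fact that, since $x\mapsto xG$ is a bijection $\mathbb{F}_q^{k}\to\mathcal{C}$, the set of codewords of $\mathcal{C}$ vanishing on $T$ forms a space of dimension $k-\rank_G(T)$. This space is precisely $S^{\perp}(\bar T)=\{x\in\mathcal{C}:x_i=0\text{ for }i\in T\}$, because for a codeword the relation $\sum_{i}x_iH_i=\mathbf{0}$ holds automatically, so the definition of $S^{\perp}(\cdot)$ from Theorem~\ref{Th_GHamming_paritycheck} reduces to the support condition. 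Applying the identity $\rank(S(I))+\rank(S^{\perp}(I))=|I|$ established in that proof with $I=\bar T$, and using $\rank(S(\bar T))=\rank_H(\bar T)$, gives $\rank(S^{\perp}(\bar T))=(n-|T|)-\rank_H(\bar T)$. Equating the two expressions for $\rank(S^{\perp}(\bar T))$ and rearranging yields the displayed identity.

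With the identity in hand, the three conditions follow by direct substitution. For (iii$'$), take $|T|=k+1$, so $|\bar T|=n-k-1$; condition (i) on $H$ forces $\rank_H(\bar T)=n-k-1$, whence $\rank_G(T)=(k+1)+(n-k-1)-(n-k)=k$. For (i$'$), take $|T|=k-1$, so $|\bar T|=n-k+1$; condition (iii) on $H$ forces $\rank_H(\bar T)=n-k$, whence $\rank_G(T)=(k-1)+(n-k)-(n-k)=k-1$, so those columns are independent. For (ii$'$), condition (ii) on $H$ supplies a set $\bar T$ of $n-k$ columns of $H$ with $\rank_H(\bar T)\le n-k-1$; the complementary $k$ columns $T$ then satisfy $\rank_G(T)=k+\rank_H(\bar T)-(n-k)\le k-1<k$, hence are dependent. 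By Lemma~\ref{Lemma_three_conditions_NMDS}, $\mathcal{C}^{\perp}$ is NMDS.

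I expect the main obstacle to be establishing the duality identity cleanly: one must take care that $G$ has full row rank (so that $x\mapsto xG$ is injective and the codeword count matches the dimension count), and that the space of codewords vanishing on $T$ really coincides with the space $S^{\perp}(\bar T)$ of Theorem~\ref{Th_GHamming_paritycheck}. Once the identity is secured, the remaining verification is purely arithmetic. An alternative route would invoke Wei's duality theorem for the weight hierarchy, but deriving the column-rank identity keeps the proof self-contained within the results already available in this section.
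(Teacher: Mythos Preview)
Your proof is correct. The paper itself does not actually prove this lemma: it merely cites \cite{DL95_nMDS} and remarks in the surrounding text that the result ``can be deduced from the properties of the generalized Hamming weights,'' implicitly pointing to Wei's duality theorem for the weight hierarchy.

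Your route is genuinely different and more self-contained. Rather than invoking duality of generalized Hamming weights, you establish the rank identity $\rank_G(T)=|T|+\rank_H(\bar T)-(n-k)$ directly from the relation $\rank(S(I))+\rank(S^{\perp}(I))=|I|$ already proved inside Theorem~\ref{Th_GHamming_paritycheck}, and then feed it into the column-rank characterisation of Lemma~\ref{Lemma_three_conditions_NMDS} applied to $\mathcal{C}^{\perp}$ (using that $G$ is a parity check matrix for the dual). This keeps the argument entirely within the tools the paper has developed, at the cost of a short extra computation; the Wei-duality approach the paper alludes to would be essentially one line once that external theorem is quoted, but it is not proved here. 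Your closing remark about the alternative route is exactly on point.
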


One can infer from Lemma~\ref{Lemma_dual_NMDS} that a generator matrix of an $[n,k]$ NMDS code must satisfies conditions similar to those in Lemma~\ref{Lemma_three_conditions_NMDS}.

\begin{lemma}\label{Lemma_generator_matrix_NMDS}\cite{DL95_nMDS}
    Let $G$ be a generator matrix of an $[n,k]$ code $\mathcal{C}$. Then the code $\mathcal{C}$ is NMDS if and only if $G$ satisfies the conditions
	\begin{enumerate}[(i)]
		\item any $k-1$ columns of $G$ are linearly independent,
		\item there exist $k$ columns that are linearly dependent,
		\item any $k+1$ columns of $G$ have full rank.
	\end{enumerate}
\end{lemma}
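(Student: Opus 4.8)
The plan is to reduce this statement to the parity-check characterization already established in Lemma~\ref{Lemma_three_conditions_NMDS}, exploiting the duality between generator and parity-check matrices together with the self-duality of the NMDS property expressed in Lemma~\ref{Lemma_dual_NMDS}. The guiding observation is that conditions (i)--(iii) here are the images, under the substitution of code dimension $k \mapsto n-k$, of conditions (i)--(iii) there.

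First I would recall the standard fact that a generator matrix $G$ of $\mathcal{C}$ serves as a parity-check matrix of the dual code $\mathcal{C}^{\perp}$: by definition $\mathcal{C}^{\perp}$ consists precisely of those $c \in \mathbb{F}_q^n$ orthogonal to every row of $G$, i.e. $c \in \mathcal{C}^{\perp} \iff Gc^{T} = \mathbf{0}$, which is exactly the defining property of a parity-check matrix for $\mathcal{C}^{\perp}$. Since $\mathcal{C}$ is an $[n,k]$ code, $\mathcal{C}^{\perp}$ is an $[n,n-k]$ code, and $G$ is a valid $k \times n$ parity-check matrix for it. Next I would invoke Lemma~\ref{Lemma_dual_NMDS} to obtain the equivalence: $\mathcal{C}$ is NMDS $\iff$ $\mathcal{C}^{\perp}$ is NMDS. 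Although Lemma~\ref{Lemma_dual_NMDS} is stated in one direction only, applying it to $\mathcal{C}^{\perp}$ and using $(\mathcal{C}^{\perp})^{\perp} = \mathcal{C}$ supplies the reverse implication, so the biconditional indeed holds.

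The core step is then to apply Lemma~\ref{Lemma_three_conditions_NMDS} to the $[n,n-k]$ code $\mathcal{C}^{\perp}$ with its parity-check matrix $G$. In that lemma the relevant dimension is now $n-k$, so one replaces the parameter ``$k$'' appearing there by $n-k$; the three column counts $n-k-1$, $n-k$, and $n-k+1$ become $n-(n-k)-1 = k-1$, $n-(n-k) = k$, and $n-(n-k)+1 = k+1$, respectively. Under this translation, condition (i) (every $k-1$ columns of $G$ linearly independent), condition (ii) (some $k$ columns linearly dependent), and condition (iii) (any $k+1$ columns of full rank) for $G$ correspond verbatim to conditions (i)--(iii) of Lemma~\ref{Lemma_three_conditions_NMDS} for $\mathcal{C}^{\perp}$. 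Combining this with the equivalence of the previous paragraph yields the claim.

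The only part requiring genuine care is the bookkeeping just described, namely being explicit that $G$ is a bona fide parity-check matrix of $\mathcal{C}^{\perp}$ and that the index substitution $k \mapsto n-k$ maps the column counts correctly; the upgrade of the one-directional Lemma~\ref{Lemma_dual_NMDS} to a full equivalence via double dualization is the one subtle point worth stating explicitly. Beyond this I anticipate no real obstacle, since the argument is essentially a transport of Lemma~\ref{Lemma_three_conditions_NMDS} through the dual.
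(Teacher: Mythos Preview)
Your proposal is correct and mirrors exactly the paper's own route: the paper does not spell out a proof but simply remarks that the statement follows from Lemma~\ref{Lemma_dual_NMDS} together with Lemma~\ref{Lemma_three_conditions_NMDS}, which is precisely the duality-plus-substitution argument you give. Your explicit handling of the biconditional in Lemma~\ref{Lemma_dual_NMDS} via double dualization is a nice clarification that the paper leaves implicit.
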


\noindent We will now explore MDS and NMDS matrices, which have notable cryptographic applications. The concept of MDS and NMDS matrices is derived from the MDS and NMDS codes, respectively. Generally, the matrix $A$ in the generator matrix $G=[I~|~A]$ of an $[n,k]$ code $\mathcal{C}$ is considered to be an MDS or NMDS matrix depending on whether the code $\mathcal{C}$ is MDS or NMDS. Since square matrices are typically used in practice, for the sake of simplicity, we will consider the $[2n,n]$ code instead of the generic form of the $[n,k]$ code throughout the rest of this paper.

\begin{definition}\label{NMDS_DEF}
	A matrix $A$ of order $n$ is said to be an MDS (NMDS) matrix if $[I~|~A]$ is a generator matrix of a $[2n,n]$ MDS (NMDS) code.
\end{definition}
Since the dual of an MDS code is also an MDS code, and Lemma~\ref{Lemma_dual_NMDS} demonstrates that the dual of an NMDS code is an NMDS code, we can consequently deduce the following results regarding MDS and NMDS matrices.

\begin{corollary}\label{transpose_is_NMDS}
	If $A$ is an MDS (NMDS) matrix, then $A^{T}$ is also an MDS (NMDS) matrix.
\end{corollary}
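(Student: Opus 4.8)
The plan is to prove both statements at once by exploiting duality, since the dual of an MDS code is again MDS~\cite[page 318]{FJ77} and, by Lemma~\ref{Lemma_dual_NMDS}, the dual of an NMDS code is again NMDS. First I would suppose $A$ is an MDS (respectively NMDS) matrix, so that by Definition~\ref{NMDS_DEF} the matrix $G=[I_n \mid A]$ generates a $[2n,n]$ MDS (respectively NMDS) code $\mathcal{C}$. Using the standard-form relationship recalled in Section~\ref{Section_Def_Direct_MDS_NMDS}, the matrix $H=[-A^{T} \mid I_n]$ is then a parity check matrix of $\mathcal{C}$, hence a generator matrix of the dual code $\mathcal{C}^{\perp}$. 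By the two duality facts just cited, $\mathcal{C}^{\perp}$ is itself MDS (respectively NMDS).

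Next I would bring the generator matrix $H=[-A^{T} \mid I_n]$ of $\mathcal{C}^{\perp}$ into the standard form required by Definition~\ref{NMDS_DEF}. Swapping the first block of $n$ coordinates with the last block yields $[I_n \mid -A^{T}]$, and multiplying the last $n$ coordinates by $-1$ then yields $[I_n \mid A^{T}]$. Both operations are monomial transformations of the coordinate set (a permutation followed by nonzero scalings), so they carry $\mathcal{C}^{\perp}$ to equivalent codes. The crucial point is that the generalized Hamming weights $d_r$, and hence the MDS and NMDS properties characterised through them, depend only on the sizes of the supports $\chi(\mathcal{D})$ of subcodes, and these are unchanged by relabelling coordinate positions or by scaling a coordinate by a nonzero element. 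Consequently the code generated by $[I_n \mid A^{T}]$ is MDS (respectively NMDS), which is precisely the statement that $A^{T}$ is an MDS (respectively NMDS) matrix.

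I expect the main obstacle to be the bookkeeping in the equivalence argument rather than any deep difficulty. One must confirm that $H=[-A^{T}\mid I_n]$ is genuinely a generator matrix of $\mathcal{C}^{\perp}$ (its $n$ rows are independent, giving the correct dimension), and, more substantively, justify that monomial coordinate transformations preserve each $d_r$; this follows directly from Definition~\ref{Def_GHamming_weight}, since a nonzero scaling leaves $\chi(\mathcal{D})$ fixed for every subcode $\mathcal{D}$ and a permutation only relabels positions. The sign $-1$ produced by the parity check matrix is absorbed exactly by the scaling step, so no characteristic-two hypothesis is needed, although in that case $-A^{T}=A^{T}$ makes the step vacuous. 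For the MDS half one could alternatively invoke the determinantal characterisation, noting that every square submatrix of $A^{T}$ is the transpose of a square submatrix of $A$ and that transposition preserves nonsingularity; I would nonetheless present the duality argument as the primary proof, since it handles the MDS and NMDS cases uniformly and reuses the results already established.
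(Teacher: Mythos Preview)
Your argument is correct and follows exactly the duality route that the paper indicates in the sentence preceding the corollary, where it invokes the fact that the dual of an MDS (respectively NMDS) code is again MDS (respectively NMDS). The paper leaves the passage from $[-A^{T}\mid I_n]$ to $[I_n\mid A^{T}]$ implicit, so your explicit monomial-equivalence bookkeeping simply fills in details the paper omits.
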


The goal of lightweight cryptography is to design ciphers that require minimal hardware resources, consume low energy, exhibit low latency, and optimize their combinations. One proposed method for reducing chip area is the use of recursive MDS (NMDS) matrices.

\begin{definition}
    Let $s$ be a positive integer. We say that a matrix $B$ is recursive MDS (NMDS) or $s$-MDS ($s$-NMDS) if the matrix $A=B^s$ is MDS (NMDS). If $B$ is $s$-MDS ($s$-NMDS), then we say that $B$ yields an MDS (NMDS) matrix.
\end{definition}

\begin{example}\label{Example_22-MDS}
    The matrix	
    $$B=
    \left[ \begin{array}{rrrr}
    0 & 1 & 0 & 0 \\
    0 & 0 & 1 & 0 \\
    0 & 0 & 0 & 1 \\
    1 & \alpha & 0 & 0
    \end{array} \right]
    $$
    is $22$-MDS and 10-NMDS, where $\alpha$ is a primitive element of the field $\FF_{2^4}$ and a root of $x^4+x+1$.
\end{example}

\noindent Vandermonde matrices have gained significant attention in the literature of constructing MDS codes. However, Vandermonde matrices defined over a finite field may contain singular square submatrices~\cite[Page~323]{FJ77}. Consequently, these matrices by themselves need not be MDS. To address this issue, Lacan and Fimes \cite{LACAN2003,LACAN} employed two Vandermonde matrices to construct an MDS matrix. Later, Sajadieh et al. \cite{V_MDS} used a similar approach to obtain an MDS matrix that is also involutory.

\begin{definition}(Vandermonde matrix){\label{def:vandermonde}}
	The matrix
	\begin{center}
		$A= \vand(x_{1},x_{2},\ldots,x_{n})=
		\begin{bmatrix}
            1 ~ & ~ 1 ~ & ~  \ldots ~ & ~ 1 \\
            x_{1} ~ & ~ x_{2} ~ & ~ \ldots ~ & ~ x_{n} \\
            x_{1}^2 ~ & ~ x_{2}^2 ~  & ~ \ldots ~ & ~ x_{n}^2 \\
            \vdots ~ & ~ \vdots ~ & ~ \vdots ~ &  ~ \vdots \\
            x_{1}^{n-1} ~ & ~ x_{2}^{n-1} ~ & ~ \ldots ~ & ~ x_{n}^{n-1} \\
		\end{bmatrix}$
	\end{center}
	is called a Vandermonde matrix, where $x_{i}$'s are elements of a finite or infinite field.
\end{definition}

We sometimes use the notation $\vand({\bf x})$ to represent the Vandermonde matrix $\vand(\splitatcommas{x_{1},x_{2},\ldots,x_{n}})$, where ${\bf x}=(x_{1},x_{2},\ldots,x_{n})$. It is known that 
\begin{center}
    $\det(\vand({\bf x}))= \displaystyle{\prod_{1\leq i < j \leq n} (x_{j} - x_{i})}$,
\end{center}
which is nonzero if and only if the $x_{i}$'s are distinct.

\noindent There are several generalizations of the Vandermonde matrices in the literature, as documented in~\cite{Gvand_2,GVand_3,GVand,GVand_4,GVand_5,GVand_6} and the references therein. Our focus is on the variant presented in~\cite{GVand}, due to its applications in cryptography and error correcting codes. The definition of this variant is as follows.

\begin{definition}(Generalized Vandermonde matrix)\label{Def_GVand_intro}
    Let ${\bf x} = (x_1,x_2,\ldots,x_n) \in \mathbb{F}_q^n$ and let $I$ be a finite set of nonnegative integers. Let $t_1 < t_2 < \ldots < t_n$ be the ordered elements of the set $\{0, 1, \ldots, n + |I| - 1\} \setminus I$. Then the matrix
    \[ V_{\perp}({\bf x};I) = \left[
    \begin{array}{cccc}
    x_1^{t_1} ~ & ~ x_2^{t_1} ~ & ~ \ldots ~ & ~ x_n^{t_1}\\
    x_1^{t_2} ~ & ~ x_2^{t_2} ~ & ~ \ldots ~ & ~ x_n^{t_2}\\
    \vdots & \vdots & \vdots \\
    x_1^{t_n} ~ & ~ x_2^{t_n} ~ & ~ \ldots ~ & ~ x_n^{t_n}\\
    \end{array}
    \right]
    \]
    is said to be a generalized Vandermonde matrix with respect to $I$.
\end{definition}

\begin{remark}
    Observe that the matrix $V_{\perp}({\bf x};I)$ is the standard Vandermonde matrix $\vand({\bf x})$ if $I = \varnothing$, in which case the powers are simply $0, 1, \ldots, n-1$.
\end{remark}

Now, we will see how the determinant of $V_{\perp}({\bf x};I)$ can be computed with the help of the determinant of the Vandermonde matrix $\vand({\bf x})$ when $I$ is nonempty. To do so, we require the following definition.

\begin{definition}
    The elementary symmetric polynomial of degree $d$ is defined as
    \begin{align*}
        \sigma_{d}(x_1,x_2,\ldots,x_n)&=\sum_{w(e)=d}^{}{x_1^{e_1}x_2^{e_2}\cdots x_n^{e_n}},
    \end{align*}
    where $e=(e_1,e_2,\ldots,e_n)\in \mathbb{F}_2^n$.
\end{definition}

\begin{theorem}\cite[Theorem~1]{GVand}\label{Th_GVand_discon}
    If $I=\set{l_1,l_2,\ldots,l_s}$, we have
    \begin{center}
        $\det(V_{\perp}({\bf x}; I))=\det(\vand({\bf x}))\det(S({\bf x}))$,
    \end{center}
    where $S({\bf x})$ is the $s \times s$ matrix defined as
    \[S({\bf x}) = \begin{bmatrix}
    \sigma_{n-l_1}({\bf x}) & \sigma_{n-l_1+1}({\bf x}) & \cdots & \sigma_{n-l_1+s-1}({\bf x}) \\
    \sigma_{n-l_2}({\bf x}) & \sigma_{n-l_2+1}({\bf x}) & \cdots & \sigma_{n-l_2+s-1}({\bf x}) \\
    \vdots & \vdots & \ddots & \vdots \\
    \sigma_{n-l_s}({\bf x}) & \sigma_{n-l_s+1}({\bf x}) & \cdots & \sigma_{n-l_s+s-1}({\bf x})
    \end{bmatrix}. \]
\end{theorem}

\begin{lemma}\cite[Lemma~1]{GVand}\label{Lemma_GVand_Det}
    If $I=\set{l}$, we have
    \begin{center}
        $\det(V_{\perp}({\bf x}; I))=\det(\vand({\bf x}))\sigma_{n-l}({\bf x})$.
    \end{center}
\end{lemma}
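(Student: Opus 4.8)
The plan is to recognize the statement as the single-discontinuity specialization of Theorem~\ref{Th_GVand_discon} and then simply evaluate the auxiliary determinant $\det(S({\bf x}))$ in that case. In Theorem~\ref{Th_GVand_discon} the set of discontinuities is $I=\set{l_1,l_2,\ldots,l_s}$ and the correction factor is the $s\times s$ determinant of $S({\bf x})=(\sigma_{n-l_i+j-1}({\bf x}))_{i,j=1}^{s}$. For $I=\set{l}$ we have $s=1$ and $l_1=l$, so $S({\bf x})$ collapses to a $1\times 1$ matrix whose unique entry is obtained by setting $i=j=1$, namely $\sigma_{n-l+1-1}({\bf x})=\sigma_{n-l}({\bf x})$. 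First I would substitute these values and observe that the determinant of a $1\times 1$ matrix is just its single entry, so $\det(S({\bf x}))=\sigma_{n-l}({\bf x})$. Plugging this into Theorem~\ref{Th_GVand_discon} yields $\det(V_{\perp}({\bf x};I))=\det(vand({\bf x}))\,\sigma_{n-l}({\bf x})$, which is exactly the claim.

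Since the result is an immediate corollary of the stronger theorem already in hand, there is essentially no obstacle; the only point requiring care is the bookkeeping of the index shift, namely verifying that the $(1,1)$ entry of $S({\bf x})$ really is $\sigma_{n-l}$ and not an off-by-one neighbour. If Theorem~\ref{Th_GVand_discon} were unavailable, I would instead give a direct argument: the matrix $V_{\perp}({\bf x};\set{l})$ is the generalized Vandermonde matrix whose row exponents are $\set{0,1,\ldots,n}\setminus \set{l}$, and its determinant divided by $\det(vand({\bf x}))$ is the Schur polynomial attached to the partition with $n-l$ parts equal to $1$ read off from these exponents. That Schur polynomial equals the elementary symmetric polynomial $\sigma_{n-l}({\bf x})$, recovering the same formula; the harder part in this alternative route would be justifying the Schur-polynomial factorization, which is precisely what the cited theorem supplies for free.
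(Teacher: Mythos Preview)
Your proposal is correct. The paper does not supply its own proof of this lemma; it is quoted from~\cite{GVand} alongside Theorem~\ref{Th_GVand_discon}, and your derivation as the $s=1$ specialization of that theorem (with $\det(S({\bf x}))=\sigma_{n-l_1+1-1}({\bf x})=\sigma_{n-l}({\bf x})$) is exactly the intended relationship between the two cited results.
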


By substituting $I=\set{n-1}$ and $I=\set{1}$ into Lemma~\ref{Lemma_GVand_Det}, we can derive Corollaries~\ref{Corollary_GVand_det1} and \ref{Corollary_GVand_det2}, respectively.

\begin{corollary}\label{Corollary_GVand_det1}
	Let $I=\set{n-1}$, then $\det(V_{\perp}({\bf x}; I))=\det(\vand({\bf x}))(\sum_{i=1}^{n}{x_i})$.
\end{corollary}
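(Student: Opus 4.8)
The plan is to obtain this corollary as an immediate specialization of Lemma~\ref{Lemma_GVand_Det}, so no independent determinant computation is needed. First I would instantiate that lemma with the single-element discontinuity set $I = \set{n-1}$, i.e. take $l = n-1$. The lemma then gives
\begin{align*}
    \det(V_{\perp}({\bf x}; I)) &= \det(vand({\bf x}))\,\sigma_{n-l}({\bf x}) = \det(vand({\bf x}))\,\sigma_{1}({\bf x}),
\end{align*}
since the index arithmetic yields $n - l = n - (n-1) = 1$.

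Next I would unwind the definition of the elementary symmetric polynomial at degree $d = 1$ to identify $\sigma_1({\bf x})$ with the plain sum of the variables. By definition $\sigma_1({\bf x}) = \sum_{w(e)=1} x_1^{e_1}x_2^{e_2}\cdots x_n^{e_n}$ with $e \in \mathbb{F}_2^n$, and the only exponent vectors of Hamming weight one are the standard unit vectors $e = \mathbf{e}_i$, each of which contributes exactly the monomial $x_i$. Summing these contributions gives $\sigma_1({\bf x}) = \sum_{i=1}^{n} x_i$, and substituting this into the displayed equation produces the claimed identity $\det(V_{\perp}({\bf x}; I)) = \det(vand({\bf x}))\left(\sum_{i=1}^{n} x_i\right)$.

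I expect no genuine obstacle here, as the statement is a direct corollary of an already-established lemma; the entire content is the observation that the degree-one elementary symmetric polynomial collapses to the sum of the variables. The only points requiring care are the trivial index computation $n - l = 1$ and the correct reading of the combinatorial definition of $\sigma_1$, both of which are routine.
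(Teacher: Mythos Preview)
Your proposal is correct and follows exactly the paper's approach: the paper states that Corollary~\ref{Corollary_GVand_det1} is obtained by substituting $I=\set{n-1}$ into Lemma~\ref{Lemma_GVand_Det}, which is precisely what you do. The identification $\sigma_{1}({\bf x}) = \sum_{i=1}^{n} x_i$ is immediate from the definition, so nothing further is needed.
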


\begin{corollary}\label{Corollary_GVand_det2}
	Let $I=\set{1}$ and each $x_i$ be a nonzero element of a field. Then we can express the determinant of $V_{\perp}({\bf x}; I)$ as
	\begin{center}
		$\det(V_{\perp}({\bf x}; I))=(\prod_{i=1}^{n}x_i) \det(\vand({\bf x}))(\sum_{i=1}^{n}x_i^{-1})$.
	\end{center}
\end{corollary}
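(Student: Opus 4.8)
The plan is to reduce everything to the single-discontinuity formula already established in Lemma~\ref{Lemma_GVand_Det} and then to rewrite the resulting elementary symmetric polynomial. Substituting $l=1$ into Lemma~\ref{Lemma_GVand_Det} immediately yields
\[
\det(V_{\perp}({\bf x}; I)) = \det(vand({\bf x}))\,\sigma_{n-1}({\bf x}),
\]
so the entire task collapses to verifying the identity
\[
\sigma_{n-1}(x_1,\ldots,x_n) = \Big(\prod_{i=1}^n x_i\Big)\Big(\sum_{i=1}^n x_i^{-1}\Big)
\]
whenever each $x_i$ is a nonzero field element.

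To establish this identity, I would unwind the definition of the elementary symmetric polynomial. For degree $d=n-1$, the index vectors $e \in \mathbb{F}_2^n$ with weight $w(e)=n-1$ are exactly those having a single zero coordinate; equivalently, each monomial in $\sigma_{n-1}$ is a product of $n-1$ of the variables, obtained by omitting exactly one of them. Hence
\[
\sigma_{n-1}(x_1,\ldots,x_n) = \sum_{j=1}^n \prod_{i\neq j} x_i.
\]
The key step is then to factor the full product out of each summand. Since all $x_i$ are nonzero, each omitted-variable product satisfies $\prod_{i\neq j} x_i = \big(\prod_{i=1}^n x_i\big)x_j^{-1}$, with the inverse taken in the field. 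Summing over $j$ and pulling the common factor $\prod_{i=1}^n x_i$ outside the sum gives the claimed identity, and combining it with the displayed consequence of Lemma~\ref{Lemma_GVand_Det} completes the argument.

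I do not anticipate any genuine obstacle here, as the proof is a short direct computation. The only points that require a little care are the correct reading of the $\mathbb{F}_2^n$ indexing in the definition of $\sigma_d$ as a sum over the $(n-1)$-element subsets of the variables, and the explicit use of the nonzero hypothesis, which is precisely what licenses the passage from $\prod_{i\neq j} x_i$ to $\big(\prod_{i=1}^n x_i\big)x_j^{-1}$ and hence the appearance of the reciprocals $x_j^{-1}$ in the final expression.
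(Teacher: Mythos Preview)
Your proposal is correct and follows exactly the approach the paper indicates: substitute $l=1$ into Lemma~\ref{Lemma_GVand_Det} to obtain $\det(V_{\perp}({\bf x};I))=\det(vand({\bf x}))\,\sigma_{n-1}({\bf x})$, then rewrite $\sigma_{n-1}({\bf x})=\sum_{j}\prod_{i\neq j}x_i=(\prod_i x_i)(\sum_j x_j^{-1})$. The paper merely asserts this substitution without spelling out the $\sigma_{n-1}$ identity, so your write-up is in fact more detailed than the original.
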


Now, we will consider the case when $I$ has more than one element, specifically, we will explore how to compute the determinant of $V_{\perp}({\bf x}; I)$ when $I=\set{1,n}$.

\begin{corollary}\label{Corollary_GVand_det3}
	Let $I=\set{1,n}$ and each $x_i$ be a nonzero element of a field. Then we can express the determinant of $V_{\perp}({\bf x}; I)$ as
	\begin{equation*}
		\det(V_{\perp}({\bf x}; I))=\det(\vand({\bf x}))\left(\prod_{i=1}^{n}x_i \right) \left[ (\sum_{i=1}^{n}x_i)(\sum_{i=1}^{n}x_i^{-1})-1 \right].
	\end{equation*}
\end{corollary}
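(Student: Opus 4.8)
The plan is to specialize Theorem~\ref{Th_GVand_discon} to the case $s = 2$ with $I = \set{1,n}$ and then evaluate the resulting $2 \times 2$ determinant explicitly. Writing $l_1 = 1$ and $l_2 = n$, I would substitute into the formula $S({\bf x}) = (\sigma_{n - l_i + j - 1}({\bf x}))_{i,j=1}^{s}$ and compute the index $n - l_i + j - 1$ for each of the four pairs $(i,j) \in \set{1,2}^2$, which gives the values $n-1$, $n$, $0$, and $1$ respectively. Hence
\begin{equation*}
S({\bf x}) = \begin{bmatrix} \sigma_{n-1}({\bf x}) & \sigma_{n}({\bf x}) \\ \sigma_{0}({\bf x}) & \sigma_{1}({\bf x}) \end{bmatrix},
\end{equation*}
so that $\det(S({\bf x})) = \sigma_{n-1}({\bf x})\,\sigma_{1}({\bf x}) - \sigma_{n}({\bf x})\,\sigma_{0}({\bf x})$.

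Next I would rewrite each elementary symmetric polynomial in the form appearing in the statement. By convention $\sigma_{0}({\bf x}) = 1$, while directly from the definition $\sigma_{1}({\bf x}) = \sum_{i=1}^{n} x_i$ and $\sigma_{n}({\bf x}) = \prod_{i=1}^{n} x_i$. The one slightly less routine identity is $\sigma_{n-1}({\bf x}) = \sum_{i=1}^{n} \prod_{j \neq i} x_j$; here I would invoke the hypothesis that every $x_i$ is nonzero, so that each summand equals $\left(\prod_{k=1}^{n} x_k\right) x_i^{-1}$, and therefore $\sigma_{n-1}({\bf x}) = \left(\prod_{k=1}^{n} x_k\right)\left(\sum_{i=1}^{n} x_i^{-1}\right)$.

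Finally, I would substitute these four expressions into $\det(S({\bf x}))$ and factor $\prod_{i=1}^{n} x_i$ out of both terms, obtaining
\begin{equation*}
\det(S({\bf x})) = \left(\prod_{i=1}^{n} x_i\right)\left[\left(\sum_{i=1}^{n} x_i\right)\left(\sum_{i=1}^{n} x_i^{-1}\right) - 1\right].
\end{equation*}
Multiplying by $\det(vand({\bf x}))$ as dictated by Theorem~\ref{Th_GVand_discon} then yields the claimed formula. I do not expect a serious obstacle in this argument; the only points that require genuine care are getting the index shift $n - l_i + j - 1$ correct for all four entries of $S({\bf x})$ (a single off-by-one error there would corrupt the whole determinant), and invoking the nonzero hypothesis at precisely the step where $\sigma_{n-1}$ is re-expressed in terms of the inverses $x_i^{-1}$.
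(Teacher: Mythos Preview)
Your proposal is correct and follows essentially the same approach as the paper: specialize Theorem~\ref{Th_GVand_discon} to $I=\{1,n\}$, write down the $2\times 2$ matrix $S({\bf x})$, evaluate its determinant using the identities $\sigma_0=1$, $\sigma_1=\sum x_i$, $\sigma_n=\prod x_i$, and $\sigma_{n-1}=(\prod x_i)(\sum x_i^{-1})$, and factor out $\prod x_i$. Your version is slightly more explicit about the index computation and about where the nonzero hypothesis enters, but the argument is identical in substance.
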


\begin{proof}
	From Theorem~\ref{Th_GVand_discon}, we know that \begin{center}
        $\det(V_{\perp}({\bf x}; I))=\det(\vand({\bf x}))\det(S({\bf x}))$,
    \end{center}
	where 
	$S({\bf x})=\begin{bmatrix}
		\sigma_{n-1}({\bf x}) &~ \sigma_{n}({\bf x})\\
		\sigma_{0}({\bf x}) &~ \sigma_{1}({\bf x})
	\end{bmatrix}$. Thus, we have
	\begin{equation*}
		\begin{aligned}
			\det(S({\bf x}))
			&= \sigma_{n-1}({\bf x}) \sigma_{1}({\bf x}) - \sigma_{n}({\bf x}) \sigma_{0}({\bf x})\\
			&= \left[(\prod_{i=1}^{n}x_i \sum_{i=1}^{n}x_i^{-1})(\sum_{i=1}^{n}x_i)\right] - \prod_{i=1}^{n}x_i\\
			&= \prod_{i=1}^{n}x_i \left[(\sum_{i=1}^{n}x_i)(\sum_{i=1}^{n}x_i^{-1})-1 \right].
		\end{aligned}
	\end{equation*}
	Therefore, $\det(V_{\perp}({\bf x}; I))=\det(\vand({\bf x})) \left(\prod_{i=1}^{n}x_i \right) \left[(\sum_{i=1}^{n}x_i)(\sum_{i=1}^{n}x_i^{-1})-1\right]$.\qed
\end{proof}

Now, let us recall the companion matrix structures which are used for the construction of recursive MDS matrices.

\begin{definition}(Companion matrix)\label{def_companion}
	Let $g(x) = a_1 + a_2 x + \ldots + a_{n} x^{n-1} + x^n\in \mathbb{F}_q[x]$ be a monic polynomial of degree $n$. The companion matrix $C_g\in M_n(\mathbb{F}_{q})$ associated with the polynomial $g(x)$ is given by
	\[ C_g = \left[ \begin{array}{ccccc}
	0         & 1    & 0          &\ldots   & 0 \\
	\vdots &       & \ddots &             & \vdots\\
	0         & 0    &  \ldots   & \ldots  & 1 \\
	-a_1     & -a_2 & \ldots  &  \ldots  & -a_{n}
	\end{array} \right].\]
\end{definition}

\begin{definition}
	A square matrix $M\in M_n(\mathbb{F}_{q})$ is said to be diagonalizable if $M$ is similar to a diagonal matrix. This means $M= PDP^{-1}$ for some diagonal matrix $D$ and a nonsingular matrix $P$.
\end{definition}

Now, we will consider some results related to diagonalizable companion matrices.

\begin{lemma}\cite{GuptaPV15}\label{lemma_Cg=VDV^{-1}}
Let $C_g \in M_n(\mathbb{F}_{q})$ be a nonsingular
companion matrix which is diagonalizable, say $C_g= P DP^{-1}$ where P is a nonsingular matrix of order $n$ and $D = diag(\lambda_1,\lambda_2,\ldots,\lambda_n)$. Then all entries of $P$ are nonzero. Moreover, $C_g$ can be expressed as $C_g=VDV^{-1}$, where $V = \vand(\lambda_1,\lambda_2,\ldots,\lambda_n)$.
\end{lemma}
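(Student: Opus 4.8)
The plan is to exploit the rigid structure of the eigenvectors of a companion matrix. First I would determine, for an arbitrary eigenvalue $\lambda$ of $C_g$, the shape of any associated eigenvector. Writing $C_g v = \lambda v$ with $v = (v_1,\ldots,v_n)^T$, the first $n-1$ rows of $C_g$ --- which merely shift coordinates upward --- force $v_2 = \lambda v_1$, then $v_3 = \lambda v_2 = \lambda^2 v_1$, and inductively $v_i = \lambda^{i-1} v_1$. Hence every eigenvector is a scalar multiple of $(1,\lambda,\lambda^2,\ldots,\lambda^{n-1})^T$, so each eigenspace of $C_g$ is one-dimensional.

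I would then record two consequences. Since $C_g$ is nonsingular, $\det(C_g)\neq 0$, so $0$ is not an eigenvalue and every $\lambda_j \neq 0$. Also, the $\lambda_j$ listed on the diagonal of $D$ are pairwise distinct: if $\lambda_j = \lambda_k$ for some $j\neq k$, then the $j$-th and $k$-th columns of $P$ would both lie in the same one-dimensional eigenspace and hence be linearly dependent, contradicting the nonsingularity of $P$.

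To prove that $P$ has no zero entry, I would use $C_g = PDP^{-1}$, equivalently $C_g P = PD$, which says that the $j$-th column of $P$ is an eigenvector of $C_g$ for the eigenvalue $\lambda_j$. By the first step this column equals $c_j(1,\lambda_j,\ldots,\lambda_j^{n-1})^T$ for some scalar $c_j$, and $c_j \neq 0$ since $P$ is invertible. As $\lambda_j \neq 0$, each entry $c_j\lambda_j^{i-1}$ is nonzero, so every entry of $P$ is nonzero.

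For the final claim, set $V = vand(\lambda_1,\ldots,\lambda_n)$, whose $j$-th column is exactly the eigenvector $(1,\lambda_j,\ldots,\lambda_j^{n-1})^T$; thus $C_g V = VD$. Because the $\lambda_j$ are distinct, $\det(V) = \prod_{1\leq i<j\leq n}(\lambda_j-\lambda_i) \neq 0$, so $V$ is invertible and $C_g = VDV^{-1}$. The only delicate point is the one-dimensionality of each eigenspace; once the shift structure of $C_g$ makes that transparent, everything else follows directly, so I anticipate no serious obstacle.
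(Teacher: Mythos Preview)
Your argument is correct and is the standard one for this result. Note, however, that the present paper does not actually give a proof of this lemma: it is quoted from \cite{GuptaPV15} and stated without proof, so there is no in-paper argument to compare against. What you have written---identifying the eigenvectors of $C_g$ via the shift rows, deducing one-dimensional eigenspaces, hence distinct eigenvalues and the nonvanishing of every entry of $P$, and finally replacing $P$ by the Vandermonde matrix $V$---is exactly the natural proof and matches the approach of the original reference.
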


\begin{corollary}\cite{GuptaPV15}
	A companion matrix $C_g$ is nonsingular and diagonalizable if and only if all eigenvalues of $C_g$ are distinct and nonzero.
\end{corollary}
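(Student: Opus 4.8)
The plan is to reduce both conditions to properties of the roots of the defining polynomial $g$, using the special structure of $C_g$. The first step is to record that the characteristic polynomial of $C_g$ is exactly $g(x)$, so that the eigenvalues of $C_g$ are precisely the roots of $g$; this follows by expanding $\det(xI_n - C_g)$ along the first column (or by induction on $n$). Throughout, ``diagonalizable'' is understood over $\mathbb{F}_q$, consistent with Lemma~\ref{lemma_Cg=VDV^{-1}}, and accordingly ``all eigenvalues are distinct'' will mean that $g$ has $n$ distinct roots lying in $\mathbb{F}_q$.

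For the nonsingularity half, I would note that evaluating the characteristic polynomial at $0$ gives $\det(-C_g) = (-1)^n \det(C_g) = g(0) = a_1$, so $C_g$ is nonsingular if and only if $a_1 = g(0) \neq 0$, equivalently $0$ is not a root of $g$, i.e. $0$ is not an eigenvalue of $C_g$. This disposes of the ``nonzero'' clause and is purely routine.

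The core of the argument is diagonalizability, where I would exploit that $C_g$ is non-derogatory. Explicitly, if $\lambda$ is an eigenvalue and $v = (v_1, \ldots, v_n)^T$ satisfies $C_g v = \lambda v$, then the shift structure of the first $n-1$ rows forces $v_{j+1} = \lambda v_j$ for $1 \le j \le n-1$, so $v$ is a scalar multiple of $(1, \lambda, \lambda^2, \ldots, \lambda^{n-1})^T$; hence each eigenspace is one-dimensional. A matrix of order $n$ over $\mathbb{F}_q$ is diagonalizable over $\mathbb{F}_q$ if and only if the geometric multiplicities of its eigenvalues in $\mathbb{F}_q$ sum to $n$, and since each such multiplicity equals $1$, this forces $C_g$ to have exactly $n$ distinct eigenvalues in $\mathbb{F}_q$, i.e. $g$ to split into $n$ distinct linear factors over $\mathbb{F}_q$. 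Combining this equivalence with the nonsingularity criterion yields the claim in both directions.

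The main obstacle I anticipate is not computational but conceptual: keeping the field of definition straight. The eigenvector recursion and the determinant evaluation are immediate, but one must not conclude diagonalizability over $\mathbb{F}_q$ merely from $g$ being squarefree in $\bar{\mathbb{F}}_q$, since the roots must actually lie in $\mathbb{F}_q$. I would therefore state the interpretation of ``distinct eigenvalues'' explicitly and invoke the geometric-multiplicity criterion (rather than the squarefree-minimal-polynomial criterion) so that the field over which diagonalization takes place is never in doubt.
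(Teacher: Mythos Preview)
Your argument is correct. The paper does not give an explicit proof of this statement; it cites \cite{GuptaPV15} and places the result as a corollary of Lemma~\ref{lemma_Cg=VDV^{-1}}. The implicit route is: for the forward direction, Lemma~\ref{lemma_Cg=VDV^{-1}} provides the decomposition $C_g = VDV^{-1}$ with $V$ Vandermonde, whence invertibility of $V$ forces the $\lambda_i$ to be distinct and invertibility of $D$ forces them to be nonzero; for the reverse direction one appeals to the general fact (the paper's Lemma~\ref{Lemma_distinct_eigenvalue}) that $n$ distinct eigenvalues imply diagonalizability, together with $\det C_g = \prod_i \lambda_i$.

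Your approach is different and more self-contained: rather than quoting the Vandermonde decomposition, you compute the eigenspaces of $C_g$ directly from its shift structure and show each is one-dimensional, so diagonalizability is equivalent to having $n$ distinct eigenvalues. This in effect reproves the content of Lemma~\ref{lemma_Cg=VDV^{-1}} (your eigenvector $(1,\lambda,\ldots,\lambda^{n-1})^T$ is precisely a column of the Vandermonde matrix) without invoking it. The gain is that you make the field-of-definition issue explicit, which the paper leaves ambiguous; the cost is that you do not tie the result back to the $VDV^{-1}$ form that the paper actually uses downstream in Theorem~\ref{Th_C=VDV^{-1}} and Lemma~\ref{Lemma_G'_another_gen_matrix}.
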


\begin{lemma}\cite{RAO}\label{Lemma_distinct_eigenvalue}
	If $M$ is an $n\times n$ matrix with $n$ distinct eigenvalues, then $M$ is diagonalizable.
\end{lemma}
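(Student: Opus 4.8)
The plan is to produce a basis of the ambient space consisting of eigenvectors of $M$, from which diagonalizability follows at once: if $P$ is the matrix whose columns are these eigenvectors, then $P$ is nonsingular and a direct computation gives $M = PDP^{-1}$ with $D$ the diagonal matrix of eigenvalues. The whole argument hinges on the classical fact that eigenvectors corresponding to distinct eigenvalues are linearly independent, so establishing that independence is where the real work lies.

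First I would fix, for each of the $n$ distinct eigenvalues $\lambda_1, \lambda_2, \ldots, \lambda_n$, a corresponding eigenvector $v_i$ with $M v_i = \lambda_i v_i$; by definition each $v_i$ is nonzero. The central claim to prove is that $\{v_1, v_2, \ldots, v_n\}$ is linearly independent over the field. I would argue this by induction on the number of eigenvectors. The base case (a single nonzero vector) is immediate. For the inductive step, assume $\{v_1, \ldots, v_k\}$ is independent and suppose $c_1 v_1 + \cdots + c_{k+1} v_{k+1} = \mathbf{0}$. Applying $M$ gives $c_1 \lambda_1 v_1 + \cdots + c_{k+1}\lambda_{k+1} v_{k+1} = \mathbf{0}$, while multiplying the original relation by $\lambda_{k+1}$ and subtracting eliminates the last term, yielding $\sum_{i=1}^{k} c_i (\lambda_i - \lambda_{k+1}) v_i = \mathbf{0}$. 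By the inductive hypothesis each coefficient $c_i(\lambda_i - \lambda_{k+1})$ must vanish, and since the eigenvalues are \emph{distinct} we have $\lambda_i - \lambda_{k+1} \neq 0$, forcing $c_i = 0$ for $1 \le i \le k$. The original relation then collapses to $c_{k+1} v_{k+1} = \mathbf{0}$, and as $v_{k+1} \neq \mathbf{0}$ we conclude $c_{k+1} = 0$.

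Having established that $v_1, \ldots, v_n$ are $n$ linearly independent vectors in an $n$-dimensional space, they form a basis, so $P = [\,v_1 \mid v_2 \mid \cdots \mid v_n\,]$ is nonsingular. Writing the eigenvalue relations together as $MP = PD$ where $D = \mathrm{diag}(\lambda_1, \ldots, \lambda_n)$, and multiplying on the right by $P^{-1}$, gives $M = PDP^{-1}$, which is exactly the assertion that $M$ is diagonalizable.

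The only genuine obstacle is the linear independence claim, and even that is routine once the eliminate-and-subtract trick in the inductive step is set up; the distinctness of the eigenvalues is precisely what guarantees the crucial factors $\lambda_i - \lambda_{k+1}$ are invertible. One point worth stating carefully is the field over which the eigenvectors and the basis are taken: if the $\lambda_i$ lie in $\mathbb{F}_q$ the entire argument stays within $\mathbb{F}_q^n$, and otherwise it is carried out over $\bar{\mathbb{F}}_q$, but the reasoning is identical in either case.
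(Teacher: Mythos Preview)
Your proof is correct and is the standard textbook argument. The paper itself does not supply a proof of this lemma at all: it is stated with a citation to \cite{RAO} and used as a black box, so there is no in-paper proof to compare against.
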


\begin{theorem}\cite{RAO}\label{Th_char_poly_companion}
	The characteristic polynomial of $C_g$, as defined in Definition~\ref{def_companion}, is the polynomial $g(x)=a_1 + a_2 x + \ldots + a_{n} x^{n-1} + x^n$.
\end{theorem}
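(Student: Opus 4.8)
The plan is to compute the characteristic polynomial directly as $\det(xI_n - C_g)$ and to show that it equals $g(x)$; since $g$ is monic of degree $n$, this is the sign convention that yields a monic polynomial matching $g$ exactly. I would argue by induction on the order $n$. For the base case $n=1$, the companion matrix is the $1\times 1$ matrix $[-a_1]$, so $\det(xI_1 - C_g) = x + a_1 = g(x)$, as required.

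For the inductive step, I would write out $xI_n - C_g$: its top $n-1$ rows are bidiagonal (diagonal entries $x$, superdiagonal entries $-1$), and its bottom row is $(a_1, a_2, \ldots, a_{n-1}, x + a_n)$. The crucial observation is that its first column has only two nonzero entries, namely $x$ in position $(1,1)$ and $a_1$ in position $(n,1)$. I would therefore expand the determinant by cofactors along this first column, producing exactly two terms.

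The first term, attached to the $(1,1)$ entry, involves the minor obtained by deleting the first row and column; this minor is precisely $xI_{n-1} - C_h$ for the companion matrix of the degree-$(n-1)$ polynomial $h(x) = a_2 + a_3 x + \cdots + a_n x^{n-2} + x^{n-1}$, so the induction hypothesis identifies it as $h(x)$. The second term, attached to the $(n,1)$ entry $a_1$, involves the minor obtained by deleting the last row and first column; this submatrix is lower triangular with every diagonal entry equal to $-1$, so its determinant is $(-1)^{n-1}$. Collecting the two contributions with their cofactor signs gives $\det(xI_n - C_g) = x\,h(x) + (-1)^{n+1}(-1)^{n-1} a_1 = x\,h(x) + a_1$, and expanding $x\,h(x)$ and adding $a_1$ reproduces $g(x)$ term by term.

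The only real care needed is the sign bookkeeping: the cofactor sign $(-1)^{n+1}$ on the $(n,1)$ entry combines with the $(-1)^{n-1}$ from the triangular minor to give $+1$, so the constant term comes out as $+a_1$ rather than with a spurious sign. The structural heart of the argument is recognizing that the second minor is triangular while the first is again of companion form (so the induction applies); everything else is routine.
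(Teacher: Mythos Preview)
Your proof is correct: the induction on $n$ with cofactor expansion along the first column is the standard argument, and your sign bookkeeping is right (the $(n,1)$ minor is lower triangular with diagonal entries $-1$, and the cofactor sign $(-1)^{n+1}$ combines with $(-1)^{n-1}$ to give $+1$).

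Regarding comparison with the paper: the paper does not actually prove this theorem. It is stated with a citation to \cite{RAO} (a linear algebra text) and invoked as a known fact, so there is no in-paper proof to compare against. Your argument is the textbook one and would be perfectly acceptable as a self-contained justification; nothing further is needed.
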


Since the roots of a characteristic polynomial are the eigenvalues, based on Lemma~\ref{lemma_Cg=VDV^{-1}}, Lemma~\ref{Lemma_distinct_eigenvalue}, and Theorem~\ref{Th_char_poly_companion}, we can conclude the following result for a companion matrix.

\begin{theorem}\label{Th_C=VDV^{-1}}
	If the monic polynomial $g(x)=a_1 + a_2 x + \ldots + a_{n} x^{n-1} + x^n$ has $n$ distinct nonzero roots $\lambda_1,\lambda_2,\ldots,\lambda_n$, then $C_g$ can be expressed as $C_g=VDV^{-1}$, where $V = \vand(\lambda_1,\lambda_2,\ldots,\lambda_n)$ and $D=diag(\lambda_1,\lambda_2,\ldots,\lambda_n)$.
\end{theorem}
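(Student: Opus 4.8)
The plan is to assemble the claim directly from the three results that immediately precede it, since the hypotheses are tailored to invoke them in sequence. First I would identify the eigenvalues of $C_g$: by Theorem~\ref{Th_char_poly_companion} the characteristic polynomial of $C_g$ equals $g(x)$, so the eigenvalues of $C_g$ are precisely the roots $\lambda_1,\lambda_2,\ldots,\lambda_n$ of $g(x)$. The hypothesis that these roots are distinct then lets me apply Lemma~\ref{Lemma_distinct_eigenvalue} to conclude that $C_g$ is diagonalizable, with a diagonalization $C_g = PDP^{-1}$ in which $D = diag(\lambda_1,\lambda_2,\ldots,\lambda_n)$.

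Next I would verify nonsingularity, which is where the second part of the hypothesis is used. Since the determinant of a matrix is the product of its eigenvalues, we have $\det(C_g)=\prod_{i=1}^{n}\lambda_i$ (this also equals $(-1)^n a_1$), and each $\lambda_i$ is nonzero by assumption, so $\det(C_g)\neq 0$ and $C_g$ is nonsingular. At this stage $C_g$ is a nonsingular, diagonalizable companion matrix whose diagonal part carries exactly the roots $\lambda_1,\ldots,\lambda_n$.

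Finally I would invoke Lemma~\ref{lemma_Cg=VDV^{-1}} essentially verbatim: for such a $C_g$ it guarantees the rewriting $C_g = VDV^{-1}$ with $V = vand(\lambda_1,\ldots,\lambda_n)$, which is precisely the desired conclusion. Note that here the distinctness of the $\lambda_i$ serves a second role, ensuring that $V$ is itself an invertible Vandermonde matrix (its determinant $\prod_{i<j}(\lambda_j-\lambda_i)$ is nonzero), so that $V^{-1}$ makes sense.

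I do not expect a genuine obstacle, as the theorem is a straightforward corollary collecting Lemma~\ref{lemma_Cg=VDV^{-1}}, Lemma~\ref{Lemma_distinct_eigenvalue}, and Theorem~\ref{Th_char_poly_companion}, exactly as the sentence preceding the statement advertises. The only point requiring a moment of care is keeping the two hypotheses on the roots matched to the two premises of Lemma~\ref{lemma_Cg=VDV^{-1}}: distinctness supplies diagonalizability (and the invertibility of $V$), while nonzeroness supplies nonsingularity. Once both premises are confirmed, the conclusion is immediate.
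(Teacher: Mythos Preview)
Your proposal is correct and follows exactly the approach the paper indicates: the theorem is stated as an immediate consequence of Lemma~\ref{lemma_Cg=VDV^{-1}}, Lemma~\ref{Lemma_distinct_eigenvalue}, and Theorem~\ref{Th_char_poly_companion}, and you have assembled these ingredients in the intended order with the hypotheses matched correctly to each premise.
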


\noindent We know that the rows of the generator matrix $G$ form a basis of an $[\ell,n]$ linear code $\mathcal{C}$ with $\operatorname{rank}(G)=n$. We also know that if $A$ is a nonsingular matrix, then $\operatorname{rank}(AG)=\operatorname{rank}(G)=n$. Hence, the rows of $AG$ are linearly independent and span $\mathcal{C}$, so $AG$ is another generator matrix of $\mathcal{C}$. Thus, we have the following lemma.

\begin{lemma}\label{Lemma_AG_another_gen_matrix}
    Let $A$ be an $n\times n$ nonsingular matrix and $G$ be a generator matrix of an $[\ell,n]$ code $\mathcal{C}$. Then $AG$ is also a generator matrix of the code $\mathcal{C}$.
\end{lemma}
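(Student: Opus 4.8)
The plan is to verify directly the two defining properties of a generator matrix for the product $AG$: that its rows span the code $\mathcal{C}$, and that they are linearly independent. Since $G$ generates the $[l,n]$ code $\mathcal{C}$, it is an $n \times l$ matrix of rank $n$ whose row space equals $\mathcal{C}$; left-multiplication by the $n \times n$ matrix $A$ produces a matrix $AG$ of the same size $n \times l$, so it is at least of the right shape to be a generator matrix.

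First I would establish the easy containment of row spaces. Writing $G_1, G_2, \ldots, G_n$ for the rows of $G$, the $i$-th row of $AG$ equals $\sum_{j=1}^{n} A_{ij} G_j$, a linear combination of the rows of $G$. Hence the row space of $AG$ is contained in the row space of $G$, which by hypothesis is exactly $\mathcal{C}$.

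Next I would pin down the dimension to upgrade this containment to equality. Because $A$ is nonsingular, Theorem~\ref{Th_rank(AB)=rank(B)} yields $\rank(AG) = \rank(G) = n$. Thus the row space of $AG$ is an $n$-dimensional subspace of the $n$-dimensional space $\mathcal{C}$, which forces the two to coincide; so the rows of $AG$ span $\mathcal{C}$. Moreover, $\rank(AG)=n$ means the $n$ rows of $AG$ are linearly independent and therefore form a basis of $\mathcal{C}$. By definition, $AG$ is then a generator matrix of $\mathcal{C}$.

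There is essentially no serious obstacle here; the lemma is a routine piece of linear algebra, and the only thing to watch is the bookkeeping of conventions, namely that an $[l,n]$ code has length $l$ and dimension $n$, so $G$ is $n \times l$ and its \emph{rows} are the codewords forming a basis. The single nontrivial input is Theorem~\ref{Th_rank(AB)=rank(B)}, which furnishes the rank equality that converts the containment of row spaces into equality. As an alternative to invoking the rank theorem, one could obtain equality of row spaces directly from $G = A^{-1}(AG)$, which shows each row of $G$ is likewise a combination of the rows of $AG$; the rank argument, however, is the cleaner route since it simultaneously delivers the linear independence of the rows of $AG$.
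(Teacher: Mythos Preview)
Your proof is correct and follows essentially the same approach as the paper's own proof: both show that the rows of $AG$ are linear combinations of the rows of $G$ (hence lie in $\mathcal{C}$), and both invoke Theorem~\ref{Th_rank(AB)=rank(B)} to obtain $\rank(AG)=n$, yielding linear independence and therefore a basis of $\mathcal{C}$. Your additional remark about the alternative argument via $G=A^{-1}(AG)$ is a nice observation but not needed.
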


\noindent \textbf{Remark on Index Notation:} In the subsequent sections, multiple index sets are occasionally utilized simultaneously to define the parameters and submatrices of our constructions. To prevent ambiguity, we establish the following notations: $E$ represents the full set of available indices in a given context (for example, $E = \{0, 1, \dots, n-1, m, \dots, m+n-1\}$). The variable $R$ strictly denotes a specific subset of indices (typically $R \subset E$) corresponding to the columns currently being evaluated for linear independence. Finally, in Section~\ref{Section_nonre_Direct_MDS_NMDS}, $I$ is strictly reserved to denote the set of powers within the generalized Vandermonde matrices $V_{\perp}({\bf x}; I)$.

\section{Direct Construction of Recursive MDS and NMDS Matrices}\label{Section_Rec_Direct_MDS_NMDS}
\sloppy
In this section, we present various techniques for direct construction of MDS and NMDS matrices over finite fields, in recursive approach. To the best of our knowledge, we are the first to provide a direct construction method for recursive NMDS matrices.
We begin by establishing a condition for the similarity between a companion matrix and a diagonal matrix. Using this condition, we can represent the companion matrix as a combination of a Vandermonde matrix and a diagonal matrix. We utilize determinant expressions for generalized Vandermonde matrices to present several techniques for constructing recursive NMDS matrices that are derived from companion matrices. Furthermore, a new direct construction for recursive MDS matrices is introduced.

\begin{lemma}\label{Lemma_G'_another_gen_matrix}
	Let $g(x)\in \mathbb{F}_q[x]$ be a monic polynomial of degree $n$ with $n$ distinct roots, say $\lambda_1,\ldots,\lambda_n\in\bar{\mathbb{F}}_q$. Then the matrix 
	\begin{equation}\label{Eqn_RNMDS_G}
		\begin{aligned}
			G' =\left[ \begin{array}{cccccccc}
				1 & \lambda_1 & \ldots & \lambda_1^{n-1} & \lambda_1^m & \lambda_1^{m+1} & \ldots & \lambda_1^{m+n-1}\\
				\vdots & \vdots & \ddots & \vdots & \vdots & \vdots & \ddots & \vdots\\
				1 & \lambda_n & \ldots & \lambda_n^{n-1} & \lambda_n^m & \lambda_n^{m+1} & \ldots & \lambda_n^{m+n-1}\\
				\end{array} \right]
		\end{aligned}
	\end{equation}
	is also a generator matrix for the $[2n,n]$ code $\mathcal{C}$ with generator matrix $G=[I~|~(C_g^T)^m]$.
\end{lemma}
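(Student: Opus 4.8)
The plan is to realize $G'$ as $V^{T}G$ for a fixed nonsingular matrix $V^{T}$ and then invoke Lemma~\ref{Lemma_AG_another_gen_matrix}, which guarantees that multiplying a generator matrix on the left by a nonsingular matrix yields another generator matrix of the same code. Here $V = vand(\lambda_1,\ldots,\lambda_n)$ is the Vandermonde matrix built from the roots of $g$.

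First I would unpack the block structure of $G'$. Setting $D = diag(\lambda_1,\ldots,\lambda_n)$ and comparing entries, the left $n\times n$ block of $G'$ has $(i,j)$ entry $\lambda_i^{j-1}$, which is exactly $V^{T}$; the right $n\times n$ block has $(i,j)$ entry $\lambda_i^{m+j-1}=\lambda_i^{m}\lambda_i^{j-1}$, which is $D^{m}V^{T}$. Hence $G' = [V^{T}~|~D^{m}V^{T}]$.

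Next I would connect this to the companion matrix. Since $g$ has $n$ distinct roots, $V$ is nonsingular, and by Theorem~\ref{Th_C=VDV^{-1}} we have $C_g = VDV^{-1}$. As transposition commutes with taking powers, $(C_g^{T})^{m} = ((C_g)^{m})^{T} = (VD^{m}V^{-1})^{T} = (V^{T})^{-1}D^{m}V^{T}$. Substituting into $G = [I~|~(C_g^{T})^{m}]$ and left-multiplying by $V^{T}$ gives
\[
V^{T}G = [V^{T}~|~V^{T}(V^{T})^{-1}D^{m}V^{T}] = [V^{T}~|~D^{m}V^{T}] = G'.
\]
Because the roots are distinct, $\det(V) = \prod_{1\le i<j\le n}(\lambda_j-\lambda_i)\neq 0$, so $V^{T}$ is nonsingular. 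Lemma~\ref{Lemma_AG_another_gen_matrix} then shows $G' = V^{T}G$ generates the same $[2n,n]$ code $\mathcal{C}$ as $G$, completing the argument.

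I expect the main obstacle to be purely bookkeeping: correctly recognizing the \emph{transpose} $V^{T}$ (rather than $V$) in the left block, pinning down the factor $D^{m}$ on the correct side in the right block, and confirming the ordering $(C_g^{T})^{m} = (V^{T})^{-1}D^{m}V^{T}$. One subtlety deserves a remark: Theorem~\ref{Th_C=VDV^{-1}} is stated for distinct \emph{nonzero} roots, whereas the lemma assumes only distinctness. If some $\lambda_i = 0$, one checks directly that $(1,\lambda_i,\ldots,\lambda_i^{n-1})^{T}$ is still an eigenvector of $C_g$ for $\lambda_i$ (using $g(\lambda_i)=0$ in the last coordinate), so the diagonalization $C_g = VDV^{-1}$ persists with the same $V$ and the proof goes through verbatim.
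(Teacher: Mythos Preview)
Your argument is correct and follows essentially the same route as the paper: diagonalize $C_g$ via Theorem~\ref{Th_C=VDV^{-1}} to obtain $G' = V^{T}G$, then apply Lemma~\ref{Lemma_AG_another_gen_matrix}. Your additional remark handling the case of a zero root (which Theorem~\ref{Th_C=VDV^{-1}} formally excludes) is a nice touch that the paper's proof omits.
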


\begin{proof}
From Theorem~\ref{Th_C=VDV^{-1}}, we know that if a polynomial $g(x)$ has $n$ distinct roots $\lambda_1, \ldots, \lambda_{n}$, then the companion matrix $C_g$ associated to $g(x)$ can be written as $C_g=VDV^{-1}$, where 
\begin{equation*}
	\begin{aligned}
		V&=\vand(\lambda_{1},\lambda_{2},\ldots,\lambda_{n})\\
		&=
		\begin{bmatrix}
            1 ~ & ~ 1 ~ & ~ \ldots ~ & ~ 1 \\
            \lambda_{1} ~ & ~ \lambda_{2} ~ & ~ \ldots ~ & ~ \lambda_{n} \\
            \lambda_{1}^2 ~ & ~ \lambda_{2}^2 ~ & ~ \ldots ~ & ~ \lambda_{n}^2 \\
            \vdots ~ & ~ \vdots ~ & ~ \vdots ~ &  ~ \vdots \\
            \lambda_{1}^{n-1} ~ & ~ \lambda_{2}^{n-1} ~ & ~ \ldots ~ & ~ \lambda_{n}^{n-1} \\
		\end{bmatrix}
	\end{aligned}
\end{equation*}
and $D=diag(\lambda_1, \ldots, \lambda_{n})$.

Let $\mathcal{C}$ be a $[2n,n]$ code with generator matrix $G=[I~|~(C_g^T)^m]$. Now

\begin{equation}\label{Eqn_G'=AG}
	\begin{aligned}
		G&= [I ~|~ (C_g^T)^m]=[I ~|~ ((V^T)^{-1}DV^T)^m]\\
		 &= [I ~|~ (V^T)^{-1}D^mV^T]\\
		 &= (V^T)^{-1}[V^T ~|~ D^m V^T]\\
		 &=(V^T)^{-1} G',
	\end{aligned}
\end{equation}
where $G' = [V^T ~|~ D^m V^T]$.
Therefore, we have
\begin{equation*}
	\begin{aligned}
		G' & = [V^T ~|~ D^m V^T] \\
		   & =\left[ \begin{array}{cccccccc}
			1 & \lambda_1 & \ldots & \lambda_1^{n-1} & \lambda_1^m & \lambda_1^{m+1} & \ldots & \lambda_1^{m+n-1}\\
			\vdots & \vdots & \ddots & \vdots & \vdots & \vdots & \ddots & \vdots\\
			1 & \lambda_n & \ldots & \lambda_n^{n-1} & \lambda_n^m & \lambda_n^{m+1} & \ldots & \lambda_n^{m+n-1}\\
			\end{array} \right].
	\end{aligned}
\end{equation*}
Also, from~(\ref{Eqn_G'=AG}), we have $G'=V^T G$. Hence, according to Lemma~\ref{Lemma_AG_another_gen_matrix}, we can conclude that $G'$ is also a generator matrix for the linear code $\mathcal{C}$.\qed
\end{proof}

Let $C_g$ be the companion matrix associated with a monic polynomial $g(x)$ of degree $n\geq 3$. Then for $m<n$, it can be observed that the first row of $C_g^m$ is a unit vector. Hence, the linear code generated by $[I~|~C_g^m]$ has minimum distance less than $n$. Therefore, for $m<n$, $C_g^m$ cannot be an MDS or NMDS matrix.

\begin{theorem}\label{Th_RMDS_MainResult}
	Let $g(x)\in \mathbb{F}_q[x]$ be a monic polynomial of degree $n$. Suppose that $g(x)$ has $n$ distinct roots, say $\lambda_1,\ldots,\lambda_n\in\bar{\mathbb{F}}_q$. Let $m$ be an integer with $m\geq n$. Then the matrix $M = C_g^m$ is MDS if and only if any $n$ columns of the matrix $G'$ given in~(\ref{Eqn_RNMDS_G}) are linearly independent.
\end{theorem}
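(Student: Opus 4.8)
The plan is to reduce the MDS property of $M=C_g^m$ to a rank condition on the explicit matrix $G'$, letting Lemma~\ref{Lemma_G'_another_gen_matrix} do the structural work. The crucial observation is that $G'$ is a generator matrix not of the code attached to $M$ itself, but of the code $\mathcal{C}$ attached to $[I~|~(C_g^T)^m]$. Since $M^T=(C_g^m)^T=(C_g^T)^m$, I would first invoke Corollary~\ref{transpose_is_NMDS} to pass from $M$ to $M^T$: the matrix $M$ is MDS if and only if $M^T=(C_g^T)^m$ is MDS. By Definition~\ref{NMDS_DEF} the latter holds precisely when $[I~|~(C_g^T)^m]$ generates a $[2n,n]$ MDS code, which is exactly the code $\mathcal{C}$ of Lemma~\ref{Lemma_G'_another_gen_matrix}.

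It then remains to translate ``$\mathcal{C}$ is MDS'' into a statement about $G'$. Here I would use the generator-matrix characterisation of MDS codes: a $[2n,n]$ code is MDS if and only if every $n$ columns of one (equivalently, any) of its generator matrices are linearly independent. The forward implication is Remark~\ref{Remark_MDS_generator}. For the converse, assuming every $n$ columns of a generator matrix are independent, any nonzero codeword $xG'$ with $x\neq \mathbf{0}$ cannot vanish on a set $S$ of $n$ coordinates, since otherwise $x$ would lie in the kernel of the nonsingular $n\times n$ submatrix $(G')_S$; hence every nonzero codeword has weight at least $n+1$, and combined with the Singleton bound $d\leq n+1$ this forces $d=n+1$, i.e.\ MDS. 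Because a nonsingular left factor preserves the rank of every column subset (Lemma~\ref{Lemma_AG_another_gen_matrix} together with Theorem~\ref{Th_rank(AB)=rank(B)}), this column condition does not depend on which generator matrix of $\mathcal{C}$ is tested.

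Finally, since Lemma~\ref{Lemma_G'_another_gen_matrix} identifies $G'$ as a generator matrix of $\mathcal{C}$, I would apply the characterisation of the previous paragraph directly to $G'$: the code $\mathcal{C}$ is MDS if and only if every $n$ columns of $G'$ are linearly independent. Chaining this with the equivalences of the first paragraph gives that $M=C_g^m$ is MDS if and only if any $n$ columns of $G'$ are linearly independent, which is the claim.

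I expect the argument to be short once the reductions are arranged; the only genuine content beyond bookkeeping is the transpose step, which is easy to overlook because $G'$ is built from $C_g^T$ rather than $C_g$, and the converse half of the generator-matrix characterisation, which the paper states explicitly only in one direction. The hypothesis $m\geq n$ plays no role in the logical equivalence itself; it is what places us in the regime where the column-exponents $0,\ldots,n-1,m,\ldots,m+n-1$ of $G'$ are distinct, so that the independence condition can hold at all. For $m<n$ some of these exponents repeat, two columns of $G'$ coincide, and the condition fails automatically, matching the remark preceding the theorem that $C_g^m$ is then not MDS.
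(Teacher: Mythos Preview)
Your proposal is correct and follows essentially the same route as the paper's proof: transpose via Corollary~\ref{transpose_is_NMDS}, pass to the code $\mathcal{C}$ with generator matrix $[I~|~(C_g^T)^m]$, invoke Lemma~\ref{Lemma_G'_another_gen_matrix} to replace this by $G'$, and then use the generator-matrix characterisation of MDS codes. You are in fact slightly more careful than the paper, which cites only Remark~\ref{Remark_MDS_generator} (the forward implication) while tacitly using the biconditional; your explicit argument for the converse direction fills that small gap.
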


\begin{proof}
	From Corollary~\ref{transpose_is_NMDS}, we know that $C_g^m$ is an MDS matrix if and only if its transpose $(C_g^m)^T = (C_g^T)^m$ is also an MDS matrix. Also, according to Definition~\ref{NMDS_DEF}, $(C_g^T)^m$ is MDS if and only if the $[2n,n]$ code $\mathcal{C}$, with generator matrix $G=[I~|~(C_g^T)^m]$, is an MDS code. 
	
	Now, since $\lambda_1,\ldots,\lambda_n$ are $n$ distinct roots of $g(x)$, from Lemma~\ref{Lemma_G'_another_gen_matrix}, we can say that the matrix $G'$ in~(\ref{Eqn_RNMDS_G}) is also a generator matrix for the code $\mathcal{C}$. Therefore, by Remark~\ref{Remark_MDS_generator}, we can establish that $(C_g^m)^T$ is MDS, and hence $C_g^m$, if and only if any $n$ columns of $G'$ are linearly independent.\qed
\end{proof}

\begin{theorem}\label{Th_RNMDS_MainResult}
	Let $g(x)\in \mathbb{F}_q[x]$ be a monic polynomial of degree $n$. Suppose that $g(x)$ has $n$ distinct roots, say $\lambda_1,\ldots,\lambda_n\in\bar{\mathbb{F}}_q$. Let $m$ be an integer with $m\geq n$. Then the matrix $M = C_g^m$ is NMDS if and only if the matrix $G'$ given in~(\ref{Eqn_RNMDS_G}) satisfies the three conditions in Lemma~\ref{Lemma_generator_matrix_NMDS}.
\end{theorem}

\begin{proof}
	From Corollary~\ref{transpose_is_NMDS}, we know that $C_g^m$ is an NMDS matrix if and only if its transpose $(C_g^m)^T = (C_g^T)^m$ is also an NMDS matrix. Also, by Definition~\ref{NMDS_DEF}, $(C_g^T)^m$ is an NMDS matrix if and only if the $[2n,n]$ code $\mathcal{C}$, with generator matrix $G=[I~|~(C_g^T)^m]$, is an NMDS code. 
	
	As $\lambda_1,\ldots,\lambda_n$ are $n$ distinct roots of $g(x)$, we can infer from Lemma~\ref{Lemma_G'_another_gen_matrix} that the matrix $G'$ defined in~(\ref{Eqn_RNMDS_G}) is also a generator matrix for the code $\mathcal{C}$. Consequently, we can conclude that $(C_g^m)^T$ is NMDS, and hence $C_g^m$ is NMDS, if and only if the matrix $G'$ satisfies the three conditions in Lemma~\ref{Lemma_generator_matrix_NMDS}.\qed
\end{proof}
 

\noindent 
Now, we present two methods for the construction of polynomials that yield recursive NMDS matrices. The polynomials constructed using these methods have distinct roots. The main idea behind these methods is Theorem~\ref{Th_RNMDS_MainResult}: we suitably choose $\lambda_i$, for $1\le i\le n$, and verify that the polynomial $g(x) = \prod_{i=1}^n (x-\lambda_i) \in \mathbb{F}_q[x]$ satisfies the condition of Theorem~\ref{Th_RNMDS_MainResult}. 
To do so, we must examine the rank of the submatrices of $G'$ constructed from any $t$ columns (here we examine $t=n-1,n,n+1$) of $G'$ corresponding to $\lambda_i$'s as given in~(\ref{Eqn_RNMDS_G}).
A submatrix $G'[R]$, constructed from any $t$ columns of $G'$, is given by

\begin{equation}\label{Eqn_RNMDS_submatrix_G'}
	G'[R] =
	\left[ \begin{array}{cccccccc}
	\lambda_1^{r_1} & \lambda_1^{r_2} & \ldots & \lambda_1^{r_t}\\
	\lambda_2^{r_1} & \lambda_2^{r_2} & \ldots & \lambda_2^{r_t}\\
	\vdots & \vdots & \ddots & \vdots \\
	\lambda_n^{r_1} & \lambda_n^{r_2} & \ldots & \lambda_n^{r_t}
	\end{array} \right],
\end{equation}
where $R$ denotes a set $\{r_1,r_2,\ldots,r_t\} \subset E = \{0,1,\ldots,n-1,m,m+1,\ldots,m+n-1\}$ of $t$ elements.

\noindent Before detailing the formal algebraic proofs, we provide a brief conceptual intuition for the following theorems. To transition from a MDS matrix to a NMDS matrix, we must deliberately relax the optimal branch number. In the framework of generalized Hamming weights, this requires intentionally inducing specific, controlled linear dependencies (rank deficiencies) among certain column subsets. In Theorem \ref{Th_RNMDS_ConsI(b)}, we achieve this by selecting the roots $\lambda_i$ of the companion matrix's polynomial. By enforcing explicit zero-sum constraints on these roots (e.g., $\sum_{i=1}^{n}\theta^{r_{i}}=0$), we artificially create the exact linear dependencies required by the NMDS criteria, while mathematically guaranteeing that all other submatrices remain full rank.

\begin{theorem}\label{Th_RNMDS_ConsI(b)}
	Let $\lambda_i = \theta^{i-1}$ for $1\le i\le n-1$ and $\lambda_n = \theta^n$ for some $\theta\in \mathbb{F}_q^{*}$.
	Let $g(x) = \prod_{i=1}^n (x-\lambda_i)$. Then for an integer $m\ge n$, the matrix $C_g^m$ is NMDS if and only if $\theta^{r} \neq \theta^{r'} $ for all $r,r'\in E$ and $\sum_{i=1}^n \theta^{r_i} = 0$ for some $R = \{r_1,r_2,\ldots,r_n\} \subset E$, where $E=\{0,1,\ldots,n-1,m,m+1,\ldots,m+n-1\}$.
\end{theorem}

\begin{proof}
	We have $\lambda_i = \theta^{i-1}$ for $1\le i\le n-1$ and $\lambda_n = \theta^n$. So for $R = \{r_1,r_2,\ldots,r_{t}\} \subset E$, from (\ref{Eqn_RNMDS_submatrix_G'}), we have

	\begin{equation*}
		\begin{aligned}
			G'[R] &=
			\left[ \begin{array}{cccccccc}
			1 & 1 & \ldots & 1\\
			\theta^{r_1} & \theta^{r_2} & \ldots & \theta^{r_{t}}\\
			\vdots & \vdots & \ddots & \vdots \\
			(\theta^{n-2})^{r_1} & (\theta^{n-2})^{r_2} & \ldots & (\theta^{n-2})^{r_{t}}\\
			(\theta^{n})^{r_1} & (\theta^{n})^{r_2} & \ldots & (\theta^{n})^{r_{t}}\\
			\end{array} \right]
			=
			\left[ \begin{array}{cccccccc}
			1 & 1 & \ldots & 1\\
			\theta^{r_1} & \theta^{r_2} & \ldots & \theta^{r_{t}}\\
			\vdots & \vdots & \ddots & \vdots \\
			(\theta^{r_1})^{n-2} & (\theta^{r_2})^{n-2} & \ldots & (\theta^{r_{t}})^{n-2}\\
			(\theta^{r_1})^{n} & (\theta^{r_2})^{n} & \ldots & (\theta^{r_{t}})^{n}
			\end{array} \right].
		\end{aligned} 
	\end{equation*}




	So for $R = \{r_1,r_2,\ldots,r_{n-1}\} \subset E$ we have 

	\begin{equation*}
		\begin{aligned}
			G'[R] &=
			\left[ \begin{array}{cccccccc}
			1 & 1 & \ldots & 1\\
			\theta^{r_1} & \theta^{r_2} & \ldots & \theta^{r_{n-1}}\\
			\vdots & \vdots & \ddots & \vdots \\
			(\theta^{r_1})^{n-2} & (\theta^{r_2})^{n-2} & \ldots & (\theta^{r_{n-1}})^{n-2}\\
			(\theta^{r_1})^{n} & (\theta^{r_2})^{n} & \ldots & (\theta^{r_{n-1}})^{n}
			\end{array} \right].
		\end{aligned} 
	\end{equation*}
	Now, we consider the $(n-1)\times (n-1)$ submatrix $G''[R]$ of $G'[R]$, which is constructed from the first $n-1$ rows of $G'[R]$. Therefore, we have 
	\begin{equation*}
		\begin{aligned}
			G''[R] 
			&=\vand(\theta^{r_1}, \theta^{r_2}, \ldots, \theta^{r_{n-1}}),
		\end{aligned}
	\end{equation*}
	which is nonsingular since the elements $\theta^{r_1}, \theta^{r_2}, \ldots, \theta^{r_{n-1}}$ are distinct. Therefore, any submatrix of $G'$ constructed from any $n-1$ columns has a nonsingular $(n-1)\times (n-1)$ submatrix, implying that any $n-1$ columns of $G'$ are linearly independent.

	Now, suppose $\sum_{i=1}^n \theta^{r_i'} = 0$ for some $R' = \{r_1',r_2',\ldots,r_n'\} \subset E$. Then for $R'$, we have 
	\begin{equation*}
		\begin{aligned}
			G'[R'] &=
			\left[ \begin{array}{cccccccc}
				1 & 1 & \ldots & 1\\
				\theta^{r_1'} & \theta^{r_2'} & \ldots & \theta^{r_{n}'}\\
				\vdots & \vdots & \ddots & \vdots \\
				(\theta^{r_1'})^{n-2} & (\theta^{r_2'})^{n-2} & \ldots & (\theta^{r_{n'}})^{n-2}\\
				(\theta^{r_1'})^{n} & (\theta^{r_2'})^{n} & \ldots & (\theta^{r_{n}'})^{n}
			\end{array} \right],
		\end{aligned} 
	\end{equation*}
	which is a generalized Vandermonde matrix $V_{\perp}({\bf x};I)$ with ${\bf x}=(\theta^{r_1'}, \theta^{r_2'}, \ldots, \theta^{r_{n}'})$ and $I=\set{n-1}$. Thus, from Corollary~\ref{Corollary_GVand_det1}, we have
	\begin{center}
		$\det(G'[R']) = 
        \left[
          \prod_{1 \le i < j \le n} (\theta^{r_j'} - \theta^{r_i'})
        \right]
        \left(
          \sum_{i=1}^{n} \theta^{r_i'}
        \right)$.
	\end{center}
	Since $\sum_{i=1}^n \theta^{r_i'} = 0$, we have $\det(G'[R'])=0$, i.e., the columns of $G'[R']$ are linearly dependent. Hence, there exist $n$ columns (depending on $R'$) that are linearly dependent.


	Now, we need to show that $G'$ also satisfies the third condition of Lemma~\ref{Lemma_generator_matrix_NMDS}. Let $R = \{r_1,r_2,\ldots,r_{n},r_{n+1}\} \subset E$. Then we have 
	\begin{equation*}
		\begin{aligned}
			G'[R] &=
			\left[ \begin{array}{cccccccc}
			1 & 1 & \ldots & 1 & 1\\
			\theta^{r_1} & \theta^{r_2} & \ldots & \theta^{r_{n}} & \theta^{r_{n+1}}\\
			\vdots & \vdots & \ddots & \vdots & \vdots\\
			(\theta^{r_1})^{n-2} & (\theta^{r_2})^{n-2} & \ldots & (\theta^{r_{n}})^{n-2} & (\theta^{r_{n+1}})^{n-2}\\
			(\theta^{r_1})^{n} & (\theta^{r_2})^{n} & \ldots & (\theta^{r_{n}})^{n} & (\theta^{r_{n+1}})^{n}
			\end{array} \right].
		\end{aligned} 
	\end{equation*}
    Observe that the submatrix $G'[R]$ can also be seen as the matrix obtained from the Vandermonde matrix $\vand(\theta^{r_1}, \theta^{r_2}, \ldots, \theta^{r_{n}}, \theta^{r_{n+1}})$ by deleting its $n$-th row (the row corresponding to the power $n-1$). Since the elements $\theta^{r_1},\theta^{r_2}, \ldots, \theta^{r_{n+1}}$ are distinct, the Vandermonde matrix is nonsingular, and hence its $n+1$ rows are linearly independent. Deleting one row from this set of linearly independent rows leaves $n$ linearly independent rows. Hence, $\rank(G'[R])=n$.

    Therefore, by Theorem~\ref{Th_RNMDS_MainResult}, we can conclude that $C_g^m$ is NMDS if and only if $\theta^{r} \neq \theta^{r'} $ for all $r,r'\in E$ and $\sum_{i=1}^n \theta^{r_i} = 0$ for some $R = \{r_1,r_2,\ldots,r_n\} \subset E$.
    \qed
\end{proof}

\begin{example}\label{Example_RNMDS_Cons_I(b)}
	Consider the field $\mathbb{F}_{2^4}$ with the constructing polynomial $x^4+x+1$ and let $\alpha$ be a root of it. Let $\theta=\alpha$. We can verify that $\theta^0+\theta^1+\theta^3+\theta^7=0$. Now, let us consider the polynomial $g(x)=(x-1)(x-\alpha)(x-\alpha^2)(x-\alpha^4)$. It can be verified that $C_g^m$ is an NMDS matrix for $4\leq m \leq 11$.
\end{example}

\begin{remark}\label{Remark_MDS_Construction_I(b)}
	The above theorem assumes that $\sum_{i=1}^n \theta^{r_i} = 0$ for some $R = \{\splitatcommas{r_1,r_2,\ldots,r_n}\} \subset E$. However, to ensure MDS property, the condition needs to be changed to $\sum_{i=1}^n \theta^{r_i} \neq 0$ for all $R = \{r_1,r_2,\ldots,r_n\} \subset E$~\cite[Theorem~3]{GuptaPV19}.
\end{remark}

\begin{lemma}\label{Lemma_RNMDS_clambda}
	If $g(x) = \prod_{i=1}^n (x-\lambda_i) \in \mathbb{F}_q[x]$ yields a recursive MDS (NMDS) matrix then for any $c \in \mathbb{F}_q^{*}$ the polynomial
	$\displaystyle{c^n g\left( \frac{x}{c} \right) = \prod_{i=1}^n (x - c \lambda_i)}$ also yields a recursive MDS (NMDS) matrix.
\end{lemma}

\begin{proof}
	Let $\displaystyle{g^{*}(x) = c^n g\left( \frac{x}{c} \right)}$. Then the matrix $C_{g^{*}} = cEC_gE^{-1}$ where
	\[
	E = \left[
	\begin{array}{cccccc}
	1 ~ & ~ 0 ~ & ~ 0 ~ & ~ \ldots ~ & ~ 0 ~ & ~ 0 \\
	0 ~ & ~ c ~ & ~ 0 ~ & ~ \ldots ~ & ~ 0 ~ & ~ 0 \\
	0 ~ & ~ 0 ~ & ~ c^2 ~ & ~ \ldots ~ & ~ 0 ~ & ~ 0 \\
	&  & & \ldots &  &  \\
	0 ~ & ~ 0 ~ & ~ 0 ~ & ~ \ldots ~ & ~ c^{n-2} ~ & ~ 0 \\
	0 ~ & ~ 0 ~ & ~ 0 ~ & ~ \ldots ~ & ~ 0 ~ & ~ c^{n-1}
	\end{array}
	\right].
	\]
	The matrix $C_{g^{*}}^m = c^mEC_g^mE^{-1}$ is MDS (NMDS) if and only if $C_g^m$ is MDS (NMDS).\qed
\end{proof}

Using the above lemma, it is possible to obtain more polynomials that produce recursive MDS or NMDS matrices from an initial polynomial.

\begin{remark}\label{rem:RNMDS_clambdaI(b)}
	Observe that the condition on $\theta$ in Theorem~\ref{Th_RNMDS_ConsI(b)} is applicable even if we take
	$\lambda_i = \theta^{i-1} c, 1\le i\le n-1,$ and $\lambda_n = \theta^n c$ for some $c\in \mathbb{F}_q^{*}$. 
	By considering the roots in this way, the polynomials that we get are the same as those obtained by applying Lemma \ref{Lemma_RNMDS_clambda}.
\end{remark}

\begin{lemma}\label{Lemma_RNMDS_ConsI(c)}
	Let $\lambda_1 = 1$, and $\lambda_i = \theta^{i},\,2\le i\le n$, for some $\theta\in \mathbb{F}_q^{*}$.
	Let $g(x) = \prod_{i=1}^n (x-\lambda_i)$. Then for an integer $m\ge n$, the matrix $C_g^m$ is NMDS if and only if $\theta^{r} \neq \theta^{r'} $ for all $r,r'\in E$ and $\sum_{i=1}^n \theta^{-r_i} = 0$ for some $R = \{r_1,r_2,\ldots,r_n\} \subset E$, where
	$E=\{0,1,\ldots,n-1,m,m+1,\ldots,m+n-1\}$.
\end{lemma}

\begin{proof}
	Consider $\gamma_i = \lambda_{n-i+1} = (\theta^{-1})^{i-1} c, 1\le i\le n-1$  and
	$\gamma_n = \lambda_{1} = (\theta^{-1})^{n} c$ for $c=\theta^{n}$. Then by Theorem~\ref{Th_RNMDS_ConsI(b)} and the above remark, the matrix $C_g^m$ is NMDS if and only if $\theta^{-r_i},1\le i\le n$, are distinct and $\sum_{i=1}^n \theta^{-r_i} = 0$ for some $R = \{r_1,r_2,\ldots,r_n\} \subset E$.  This completes the proof.\qed
\end{proof}

\begin{example}\label{Example_RNMDS_Cons_I(c)}
	Consider the field $\mathbb{F}_{2^4}$ with the constructing polynomial $x^4+x+1$ and let $\alpha$ be a root of it. Let $\theta=\alpha$. We can verify that $\theta^0+\theta^{-1}+\theta^{-2}+\theta^{-7}=0$. Now, let us consider the polynomial $g(x)=(x-1)(x-\alpha^2)(x-\alpha^3)(x-\alpha^4)$. It can be verified that $C_g^m$ is an NMDS matrix for $4\leq m \leq 11$.
\end{example}

\begin{remark}
	The proof of the above lemma can also be seen similarly as in the proof of Theorem~\ref{Th_RNMDS_ConsI(b)} by using Corollary~\ref{Corollary_GVand_det2}.
\end{remark}

\begin{remark}\label{Remark_MDS_Construction_I(c)}
	The above lemma assumes that $\sum_{i=1}^n \theta^{-r_i} = 0$ for some $R = \{\splitatcommas{r_1,r_2,\ldots,r_n}\} \subset E$. However, to ensure MDS property, the condition needs to be changed to $\sum_{i=1}^n \theta^{-r_i} \neq 0$ for all $R = \{r_1,r_2,\ldots,r_n\} \subset E$~\cite[Corollary~1]{GuptaPV19}.
\end{remark}
Now, we will present a direct construction of polynomial that yields recursive MDS matrix.

\begin{theorem}\label{Th_MDS_New_Construction}
	Let $\lambda_1 = 1$, and $\lambda_i = \theta^{i}$ for $2\le i\le n-1$ and $\lambda_n = \theta^{n+1}$ for some $\theta\in \mathbb{F}_q^{*}$.
	Let $g(x) = \prod_{i=1}^n (x-\lambda_i)$. Then for an integer $m\ge n$, the matrix $C_g^m$ is MDS if and only if $\theta^{r} \neq \theta^{r'} $ for all $r,r'\in E$ and $(\sum_{i=1}^n \theta^{r_i}) (\sum_{i=1}^n \theta^{-r_i}) -1\neq 0$ for all $R = \{r_1,r_2,\ldots,r_n\} \subset E$, where $E=\{0,1,\ldots,n-1,m,m+1,\ldots,m+n-1\}$.
\end{theorem}

\begin{proof}
	We have $\lambda_1 = 1$, and $\lambda_i = \theta^{i}$ for $2\le i\le n-1$ and $\lambda_n = \theta^{n+1}$.
	From Theorem~\ref{Th_RMDS_MainResult}, we know that the matrix $C_g^m$ is MDS if and only if any $n$ columns of $G'$ are linearly independent.
	So for any $R = \{r_1,r_2,\ldots,r_{n}\} \subset E$ we have 
	\begin{equation*}
		\begin{aligned}
			G'[R] &=
			\left[ \begin{array}{cccccccc}
			1 & 1 & \ldots & 1\\
			(\theta^2)^{r_1} & (\theta^2)^{r_2} & \ldots & (\theta^2)^{r_{n}}\\
			\vdots & \vdots & \ddots & \vdots \\
			(\theta^{n-1})^{r_1} & (\theta^{n-1})^{r_2} & \ldots & (\theta^{n-1})^{r_{n}}\\
			(\theta^{n+1})^{r_1} & (\theta^{n+1})^{r_2} & \ldots & (\theta^{n+1})^{r_{n}}\\
			\end{array} \right] 
			=
			\left[ \begin{array}{cccccccc}
			1 & 1 & \ldots & 1\\
			(\theta^{r_1})^2 & (\theta^{r_2})^2 & \ldots & (\theta^{r_{n}})^2\\
			\vdots & \vdots & \ddots & \vdots \\
			(\theta^{r_1})^{n-1} & (\theta^{r_2})^{n-2} & \ldots & (\theta^{r_{n-1}})^{n-2}\\
			(\theta^{r_1})^{n+1} & (\theta^{r_2})^{n+1} & \ldots & (\theta^{r_{n}})^{n+1}
			\end{array} \right].
		\end{aligned} 
	\end{equation*}
	
	Let $y_{r_i}=\theta^{r_i}$ for $1 \leq i \leq n$. Therefore, we have 
	\begin{equation*}
		\begin{aligned}
			G'[R] &=
			\left[ \begin{array}{cccccccc}
				1 & 1 & \ldots & 1\\
				y_{r_1}^2~&~y_{r_2}^2~&~\ldots~&~y_{r_n}^2\\
				\vdots & \vdots & \ddots & \vdots \\
				y_{r_1}^{n-1}~&~y_{r_2}^{n-1}~&~\ldots~&~y_{r_n}^{n-1}\\
				y_{r_1}^{n+1} ~&~ y_{r_2}^{n+1} ~&~ \ldots ~&~ y_{r_n}^{n+1}
			\end{array} \right],
		\end{aligned}
	\end{equation*}
	which is a generalized Vandermonde matrix of the form $V_{\perp}({\bf y}; I)$ with $I=\set{1,n}$. Therefore, from Corollary~\ref{Corollary_GVand_det3} $\det(G'[R])\neq 0$ if and only if $y_{r_i}$ are distinct and $(\sum_{i=1}^n y_{r_i}) (\sum_{i=1}^n y_{r_i}^{-1}) -1\neq 0$. This completes the proof.\qed
\end{proof}

\begin{example}\label{Example_RMDS_Cons_2}
	Consider the field $\mathbb{F}_{2^4}$ with the constructing polynomial $x^4+x+1$ and let $\alpha$ be a root of it. Let $\theta=\alpha$ and consider the polynomial $g(x)=(x-1)(x-\alpha^2)(x-\alpha^3)(x-\alpha^5)$. It can be checked that the polynomial $g(x)$ satisfies the condition in Theorem~\ref{Th_MDS_New_Construction}, so it yields a recursive MDS matrix of order $4$. It can be verified that $C_g^4$ is an MDS matrix.
\end{example}

\noindent So far, we have discussed recursive constructions of MDS and NMDS matrices. In the next section, we will explore the nonrecursive constructions of MDS and NMDS matrices using the direct method.

\section{Direct Construction of Nonrecursive MDS and NMDS Matrices}\label{Section_nonre_Direct_MDS_NMDS}
The application of Vandermonde matrices for constructing MDS codes is well documented in the literature~\cite{MDS_Survey,GR13,LACAN2003,LACAN,HMC1,V_MDS}.
In this section, we explore the use of generalized Vandermonde matrices for the construction of both MDS and NMDS matrices. Specifically, we focus on the generalized Vandermonde matrices $V_{\perp}({\bf x};I)$, where $I$ is a subset of $\set{1, n-1,n}$.

Generalized Vandermonde matrices, with these parameters, defined over a finite field can contain singular submatrices (see Example~\ref{Example_GVand_itself_not_MDS}). Consequently, these matrices by themselves need not be MDS over a finite field. However, like Vandermonde-based constructions, we can use two generalized Vandermonde matrices for constructing MDS matrices.

\begin{example}\label{Example_GVand_itself_not_MDS}
	Consider the generalized Vandermonde matrix $V_{\perp}({\bf x};I)$ with ${\bf x}=(1,\alpha,\alpha^2,\alpha^5)$ and $I=\set{3}$
	\begin{center}
		$V_{\perp}({\bf x};I)=
		\begin{bmatrix}
			1 & 1 & 1 & 1 \\
			1 & \alpha & {\alpha}^2 & \alpha^5\\
			1 & \alpha^2 & \alpha^4 & \alpha^{10}\\
			1 & \alpha^4 & \alpha^{8} & \alpha^{20}\\
		\end{bmatrix}$, 
	\end{center}
	where $\alpha$ is a primitive element of the finite field $\mathbb{F}_{2^4}$ constructed by the polynomial $x^4+x+1$.
	Consider the $2 \times 2$ submatrix 
	\begin{center}
		$\begin{bmatrix}
		1 & \alpha^{5} \\
		1 & \alpha^{20}\\
		\end{bmatrix}$
	\end{center}
	which is singular as $\alpha^{20}=\alpha^{5}$.
\end{example}

\noindent Vandermonde-based MDS matrix constructions may fall within the equivalence class of Cauchy-based constructions~\cite{MDS_Survey}. To move beyond this limitation, the following theorems develop MDS and NMDS constructions using generalized Vandermonde matrices $V_{\perp}({\bf x}; I)$. By omitting selected exponents specified by the set $I$ (for example, $I=\{1\}$ or $I=\{n-1\}$), we fundamentally alter the determinant structure of the matrix. In particular, the determinant is governed not only by products of pairwise differences, but also by expressions involving elementary symmetric polynomials. This additional algebraic flexibility allows us to enforce the precise submatrix rank conditions required for MDS and NMDS codes through appropriate choices of $I$ and suitable nonvanishing sum conditions on the elements $x_i$.

\begin{theorem}\label{Th_GVand_1_MDS}
	Let $V_1 = V_{\perp}({\bf x};I)$ and $V_2 = V_{\perp}({\bf y};I)$ be two generalized Vandermonde matrices with ${\bf x}=(x_1,x_2,\ldots,x_n)$, ${\bf y}=(x_{n+1},x_{n+2},\ldots,x_{2n})$ and $I=\set{n-1}$. The elements $x_i$ are $2n$ distinct elements from $\mathbb{F}_q$, and $\sum_{i=1}^{n}x_{r_i}\neq 0$ for all $R=\set{r_1,r_2,\ldots,r_n}\subset E$, where $E=\set{1,2,\ldots,2n}$. Then the matrices $V_1^{-1}V_2$ and $V_2^{-1}V_1$ are such that any square submatrix of them is nonsingular; hence, they are MDS matrices.
\end{theorem}

\begin{proof}
	Let $U$ be the $n \times 2n$ matrix $[V_1~|~V_2]$. By Corollary~\ref{Corollary_GVand_det1}, we can conclude that both $V_1$ and $V_2$ are nonsingular matrices. Consider the product $G = V_1^{-1}U = [I~|~A]$, where $A = V_1^{-1}V_2$. We will now prove that $A$ does not contain any singular submatrix.

	Now, since $U = V_1G$, from Lemma~\ref{Lemma_AG_another_gen_matrix}, we can say that $U$ is also a generator matrix for the linear code $\mathcal{C}$ generated by matrix $G=[I~|~A]$. From Remark~\ref{Remark_MDS_generator}, we know that a generator matrix $U$ generates a $[2n,n,n+1]$ MDS code if and only if any $n$ columns of $U$ are linearly independent.

	Observe that any $n$ columns of $U$ forms a generalized Vandermonde matrix of the same form as $V_1$ and $V_2$. Since each $x_i$ is distinct and $\sum_{i=1}^{n}x_{r_i}\neq 0$ for all $R=\set{r_1,r_2,\ldots,r_n}\subset E$, from Corollary~\ref{Corollary_GVand_det1}, we can say that every set $n$ column of $U$ is linearly independent. Hence, we can say that the code $\mathcal{C}$ is an MDS code.

    Therefore, $G$ generates a $[2n,n,n+1]$ MDS code and hence $A=V_1^{-1}V_2$ is an MDS matrix. For $V_2^{-1} V_1$, the proof is identical.\qed
\end{proof}

\begin{remark}
    We know that the inverse of an MDS matrix is again MDS \cite{MDS_Survey}; therefore, if $V_1^{-1}V_2$ is MDS, then $V_2^{-1} V_1$ is also MDS and vice versa.
\end{remark}

\begin{remark}
    Note that Theorem~\ref{Th_GVand_1_MDS} provides explicit conditions on the parameters $x_i$: they must be $2n$ distinct elements of $\mathbb{F}_q$, and for every subset $R=\set{r_1,r_2,\ldots,r_n} \subseteq \{1,\dots,2n\}$ of $n$ elements, the sum $\sum_{i=1}^n x_{r_i} \neq 0$. These are the explicit algebraic constraints that guarantee the nonsingularity of the submatrices of $V_1^{-1}V_2$ and $V_2^{-1}V_1$ and hence the MDS property. In practice, such conditions are satisfied by selecting $2n$ distinct elements that avoid zero-sum subsets. For sufficiently large $q$, one can choose elements randomly and then verify the required condition for any $n$-subset $R$. Below, we provide an explicit example over $\mathbb{F}_{2^8}$ yielding a $4 \times 4$ MDS matrix, demonstrating the practicality of these conditions. This approach applies similarly to all constructions presented in the paper, with examples given after each to illustrate their feasibility.
\end{remark}

\begin{example}\label{Example_GVand_1_MDS}
	Consider the generalized Vandermonde matrices $V_1 = V_{\perp}({\bf x};I)$ and $V_2 = V_{\perp}({\bf y};I)$ with ${\bf x}=(1,\alpha,\alpha^2,\alpha^3)$, ${\bf y}=(\alpha^4,\alpha^5,\alpha^6,\alpha^7)$ and $I=\set{3}$, where $\alpha$ is a primitive element of $\FF_{2^8}$ and a root of $x^8+x^7+x^6+x+1$. 
	It can be verified that $V_1$ and $V_2$ satisfy the conditions in Theorem~\ref{Th_GVand_1_MDS}. Therefore, the matrices 
	\begin{equation*}
		\begin{aligned}
			V_1^{-1}V_2&=
			\begin{bmatrix}
				\alpha^{7}   & \alpha^{234} & \alpha^{57} & \alpha^{156}\\
				\alpha^{37}  & \alpha^{66}  & \alpha^{55} & \alpha^{211}\\
				\alpha^{205} & \alpha^{100} & \alpha^{30} & \alpha^{86}\\
				\alpha^{227} & \alpha^{50}  & \alpha^{149} & \alpha^{40}\\
			\end{bmatrix}~\text{and}~V_2^{-1}V_1=
			\begin{bmatrix}
				\alpha^{136} & \alpha^{49} & \alpha^{235} & \alpha^{30}\\
				\alpha^{210} & \alpha^{77} & \alpha^{201} & \alpha^{198}\\
				\alpha^{144} & \alpha^{72} & \alpha^{52}  & \alpha^{220}\\
				\alpha^{42}  & \alpha^{228} & \alpha^{23} & \alpha^{248}
			\end{bmatrix}
		\end{aligned}
	\end{equation*}
	are MDS matrices.
\end{example}
\noindent Cauchy matrices are always MDS, meaning that it is not possible to obtain NMDS matrices directly from them. Furthermore, there is currently no known construction method for NMDS matrices using Vandermonde matrices. In Theorem~\ref{Th_GVand_1_NMDS}, we demonstrate the possibility of constructing NMDS matrices using generalized Vandermonde matrices. However, similar to MDS matrices, generalized Vandermonde matrices with $I=\set{n-1}$ themselves may not be NMDS over a finite field (see Example~\ref{Example_GVand_itself_not_NMDS}). As a consequence, we use two generalized Vandermonde matrices for constructing NMDS matrices. 

\begin{example}\label{Example_GVand_itself_not_NMDS}
	Consider the generalized Vandermonde matrix $A=V_{\perp}({\bf x};I)$ with ${\bf x}=(1,\alpha,\alpha^3,\alpha^7)$ and $I=\set{3}$,
	where $\alpha$ is a primitive element of the finite field $\mathbb{F}_{2^4}$ and a root of $x^4+x+1$. Let us consider the linear code $\mathcal{C}$ with a generator matrix
	\begin{equation*}
		\begin{aligned}
			G&=[I~|~A]\\
			 &= \begin{bmatrix}
				1 & 0 & 0 & 0 & 1 & 1 & 1 & 1 \\
				0 & 1 & 0 & 0 & 1 & \alpha & \alpha^{3} & \alpha^{7} \\
				0 & 0 & 1 & 0 & 1 & \alpha^{2} & \alpha^{6} & \alpha^{14} \\
				0 & 0 & 0 & 1 & 1 & \alpha^{4} & \alpha^{12} & \alpha^{28}
			\end{bmatrix}.
		\end{aligned}
	\end{equation*}

\noindent Now, consider matrix
	\begin{equation*}
		\begin{aligned}
			M= \begin{bmatrix}
				0 & 1 & 1 & 1 & 1 \\
				0 & 1 & \alpha & \alpha^{3} & \alpha^{7} \\
				1 & 1 & \alpha^{2} & \alpha^{6} & \alpha^{14} \\
				0 & 1 & \alpha^{4} & \alpha^{12} & \alpha^{28}
			\end{bmatrix},
		\end{aligned}
	\end{equation*}
	which is constructed by the five columns: the third, fifth, sixth, seventh, and eighth columns of $G$. It can be observed that $\rank(M)=3<4$, which violates the condition~$(iii)$ in Lemma~\ref{Lemma_generator_matrix_NMDS}. Therefore, $\mathcal{C}$ is not an NMDS code and hence $A$ is not an NMDS matrix.
\end{example}

\begin{theorem}\label{Th_GVand_1_NMDS}
	Let $V_1 = V_{\perp}({\bf x};I)$ and $V_2 = V_{\perp}({\bf y};I)$ be two generalized Vandermonde matrices with ${\bf x}=(x_1,x_2,\ldots,x_n)$, ${\bf y}=(\splitatcommas{x_{n+1},x_{n+2},\ldots,x_{2n}})$ and $I=\set{n-1}$. The elements $x_i$ are $2n$ distinct elements from $\mathbb{F}_q$ such that $\sum_{i=1}^{n}x_{i}\neq 0$, $\sum_{i=1}^{n}x_{n+i}\neq 0$  and $\sum_{i=1}^{n}x_{r_i}= 0$ for some other $R=\set{r_1,r_2,\ldots,r_n}\subset E$, where $E=\set{1,2,\ldots,2n}$. Then the matrices $V_1^{-1}V_2$  and $V_2^{-1}V_1$ are NMDS matrices.
\end{theorem}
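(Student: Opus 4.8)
The plan is to follow the template of the MDS proof in Theorem~\ref{Th_GVand_1_MDS}, but to replace the single independence criterion by the three-part NMDS characterization of Lemma~\ref{Lemma_generator_matrix_NMDS}. First I would observe that the hypotheses $\sum_{i=1}^{n}x_i\neq 0$ and $\sum_{i=1}^{n}x_{n+i}\neq 0$, combined with distinctness of the $x_i$, make $V_1$ and $V_2$ nonsingular by Corollary~\ref{Corollary_GVand_det1}. Setting $A=V_1^{-1}V_2$, $G=[I~|~A]$, and $U=[V_1~|~V_2]=V_1G$, Lemma~\ref{Lemma_AG_another_gen_matrix} shows that $U$ is also a generator matrix of the code $\mathcal{C}$ generated by $G$. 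Moreover, for any index set $J$ the column submatrix $U_J=V_1G_J$ satisfies $\rank(U_J)=\rank(G_J)$ by Theorem~\ref{Th_rank(AB)=rank(B)}, so all three conditions of Lemma~\ref{Lemma_generator_matrix_NMDS} may be verified on the more convenient matrix $U$ instead of on $G$.

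The key observation driving every rank computation is that any $n$ columns of $U$ form a generalized Vandermonde matrix $V_{\perp}({\bf z};\set{n-1})$ on the corresponding $n$-subset ${\bf z}$ of the $2n$ distinct scalars, so by Corollary~\ref{Corollary_GVand_det1} its determinant equals $\det(vand({\bf z}))(\sum z_i)$. Since the $x_i$ are distinct, $\det(vand({\bf z}))\neq 0$, and hence such an $n\times n$ block is nonsingular precisely when the sum of its $n$ scalars is nonzero. This reduces every rank question about $U$ to a question about subset sums of $\set{x_1,\ldots,x_{2n}}$.

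Condition (ii) is then immediate: the hypothesis supplies an $n$-subset $R$ with $\sum_{i}x_{r_i}=0$, so the corresponding $n$ columns form a singular block and are linearly dependent. The real work is conditions (i) and (iii), where I must pass between square and non-square submatrices. For (i) I would take any $n-1$ columns indexed by $S$ and adjoin one further column $x_j$; the enlarged $n$-set has zero sum only if $x_j=-\sum_{i\in S}x_i$, and by distinctness at most one of the $n+1$ available indices $j$ can satisfy this, so a nonsingular $n\times n$ extension exists and the original $n-1$ columns are independent. For (iii) I would take any $n+1$ columns indexed by $S$ and seek a nonzero-sum $n$-subset; if every $n$-subset summed to zero, then deleting two different elements of $S$ would force two of the $x_i$ to coincide, contradicting distinctness, so a nonsingular $n\times n$ block sits inside the $n+1$ columns and gives full rank $n$. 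Thus all three conditions hold, $\mathcal{C}$ is NMDS, and $A=V_1^{-1}V_2$ is an NMDS matrix.

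The statement for $V_2^{-1}V_1$ follows by the identical argument with the roles of ${\bf x}$ and ${\bf y}$ interchanged, using $U'=[V_2~|~V_1]$; the hypotheses are symmetric in the two blocks and the zero-sum set $R$ is unchanged. The main obstacle I anticipate is precisely the bookkeeping in (i) and (iii): the NMDS conditions speak about $n-1$ and $n+1$ columns, whereas the determinant formula of Corollary~\ref{Corollary_GVand_det1} governs only square $n\times n$ blocks, so the crux is the counting argument showing that distinctness of the $2n$ scalars always lets one locate (respectively, avoid) a suitable nonsingular $n\times n$ subblock to certify independence and full rank.
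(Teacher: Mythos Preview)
Your proposal is correct and follows the same overall scaffold as the paper: pass from $G=[I\mid V_1^{-1}V_2]$ to $U=[V_1\mid V_2]$ via Lemma~\ref{Lemma_AG_another_gen_matrix}, then verify the three conditions of Lemma~\ref{Lemma_generator_matrix_NMDS} on $U$, using Corollary~\ref{Corollary_GVand_det1} to read off singularity of $n\times n$ column blocks from subset sums. The difference lies in how conditions (i) and (iii) are certified. For (i), the paper does not extend to $n$ columns at all: it simply observes that the top $n-1$ rows of any $(n-1)$-column block form an ordinary Vandermonde matrix $vand(x_{r_1},\ldots,x_{r_{n-1}})$, which is nonsingular by distinctness. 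Your extension argument also works and keeps everything phrased in terms of the $n\times n$ determinant formula, at the cost of a small counting step. For (iii), your argument is genuinely cleaner than the paper's: you note that if all $n$-subsets of an $(n+1)$-set $S$ have zero sum, then $x_{s_j}=\sum_{i\in S}x_{s_i}$ for every $j$, forcing all $x_{s_j}$ equal. The paper instead writes out the full linear system $MX=0$ with $M$ the $(n+1)\times(n+1)$ all-ones-minus-identity matrix, computes $\det(M)=(-1)^n n$, and splits into cases according to whether the characteristic divides $n$; your two-line subtraction avoids this case analysis entirely. Both routes are valid; yours is more uniform, the paper's is slightly more direct for (i) but noticeably heavier for (iii).
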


\begin{proof}
	Let $U$ be the $n \times 2n$ matrix $[V_1~|~V_2]$. By Corollary~\ref{Corollary_GVand_det1}, we can conclude that both $V_1$ and $V_2$ are nonsingular matrices. Consider the product $G = V_1^{-1}U = [I~|~A]$, where $A = V_1^{-1}V_2$. To show that $A=V_1^{-1}V_2$ is an NMDS matrix, we need to prove that the $[2n,n]$ code $\mathcal{C}$ generated by $G=[I~|~A]$ is an NMDS code.

	Now, since $U = V_1G$, from Lemma~\ref{Lemma_AG_another_gen_matrix}, we can say that $U$ is also a generator matrix for the linear code $\mathcal{C}$. Thus, we can conclude that $A=V_1^{-1}V_2$ is an NMDS matrix if and only if $U$ meets the three conditions mentioned in Lemma~\ref{Lemma_generator_matrix_NMDS}.

	A submatrix $U[R]$, constructed from any $t$ columns of $U$, is given by
	\begin{equation*}
		U[R] =
		\left[ \begin{array}{cccccccc}
			1 & 1 & \ldots & 1\\ 
			x_{r_1} & x_{r_2} & \ldots & x_{r_t}\\
			x_{r_1}^2 & x_{r_2}^2 & \ldots & x_{r_t}^2\\
			\vdots & \vdots & \ddots & \vdots \\
			x_{r_1}^{n-2} & x_{r_2}^{n-2} & \ldots & x_{r_t}^{n-2}\\
			x_{r_1}^{n} & x_{r_2}^{n} & \ldots & x_{r_t}^{n}
		\end{array} \right],
	\end{equation*}
	where $R$ denotes a set $\{r_1,r_2,\ldots,r_t\} \subset E = \{1,2,\ldots,2n\}$ of $t$ elements.

	Since the $x_{r_i}$ are distinct, the remainder of the proof follows exactly as in Theorem~\ref{Th_RNMDS_ConsI(b)}, with each $\theta^{r_i}$ replaced by $x_{r_i}$. Thus, we conclude that $U$, and hence $G = [I~|~A]$, generates a $[2n,n]$ NMDS code. By Definition~\ref{NMDS_DEF}, it follows that $A = V_1^{-1}V_2$ is an NMDS matrix. The proof for $V_2^{-1} V_1$ is identical.
    \qed
\end{proof}

\begin{remark}\label{Remark_necessity_V1_nonsingular}
	In Theorem~\ref{Th_GVand_1_NMDS}, it is assumed that $\sum_{i=1}^{n}x_{i}\neq 0$ and $\sum_{i=1}^{n}x_{n+i}\neq 0$. This assumption is made based on Corollary~\ref{Corollary_GVand_det1}, which states that $\det(V_{\perp}({\bf x}; I))=\det(\vand({\bf x}))(\sum_{i=1}^{n}{x_i})$ and $\det(V_{\perp}({\bf y}; I))=\det(\vand({\bf y}))(\sum_{i=1}^{n}{x_{n+i}})$. If either of these sums is zero, it would result in the determinant of either $V_1$ or $V_2$ being zero, making them singular. Hence, the assumption is necessary to ensure the nonsingularity of $V_1$ and $V_2$.
\end{remark}

\begin{example}\label{Example_GVand_1_NMDS}
	Consider the generalized Vandermonde matrices $V_1 = V_{\perp}({\bf x};I)$ and $V_2 = V_{\perp}({\bf y};I)$ with ${\bf x}=(1,\alpha,\alpha^2,\alpha^3)$, ${\bf y}=(\alpha^4,\alpha^5,\alpha^6,\alpha^7)$ and $I=\set{3}$, where $\alpha$ is a primitive element of $\FF_{2^4}$ and a root of $x^4+x+1$. It is easy to check that each $x_i$ is distinct and $1+\alpha+\alpha^3+\alpha^7=0$. Therefore, the matrices
	\begin{small}
	\begin{equation*}
		\begin{aligned}
			V_1^{-1}V_2&=
			\begin{bmatrix}
				\alpha^{7}  ~&~ \alpha^{9}  ~&~ \alpha^{9}  ~&~ 1\\
				\alpha^{14}  ~&~ \alpha^{14}  ~&~ \alpha^{3}  ~&~ 1\\
				\alpha^{10}  ~&~ \alpha^{5}   ~&~ \alpha^{5}  ~&~ 0\\
				\alpha^{2}   ~&~ \alpha^{2}   ~&~ \alpha^{8}  ~&~ 1
			\end{bmatrix}~\text{and}~
			V_2^{-1}V_1=
			\begin{bmatrix}
				0  ~&~ \alpha^{7}   ~&~ 1  ~&~ \alpha^{7}\\
				1  ~&~ \alpha^{14}  ~&~ 0  ~&~ \alpha^{3}\\
				1  ~&~ \alpha^{5}   ~&~ 1  ~&~ \alpha^{10}\\
				1  ~&~ \alpha^{8}   ~&~ 1  ~&~ \alpha^{8}
			\end{bmatrix}
		\end{aligned}
	\end{equation*}	
	\end{small}
	are NMDS matrices.
\end{example}

\noindent In the context of implementing block ciphers, we know that if an efficient matrix $M$ used in encryption is involutory, then its inverse $M^{-1}=M$ applied for decryption will also be efficient. Hence, it is important to find MDS or NMDS matrices that are also involutory. 

In the following theorem, we present a method for obtaining involutory matrices from generalized Vandermonde matrices with $I=\set{n-1}$. The proof follows an approach similar to that~\cite[Theorem~4.3]{MDS_Survey} for Vandermonde matrices. However, it is important to note that in the proof of the following theorem, we rely on the conditions $\binom{n}{1}=\binom{n}{n-1}=0$ for even values of $n$ over $\mathbb{F}_{2^r}$. The proof is omitted for brevity.

\begin{theorem}\label{Th_GVand_1_inv_MDS}
	Let $V_1 = V_{\perp}({\bf x};I)$ and $V_2 = V_{\perp}({\bf y};I)$ be two generalized Vandermonde matrices of even order over $\mathbb{F}_{2^r}$ with ${\bf x}=(x_1,x_2,\ldots,x_n)$, ${\bf y}=(y_1,y_2,\ldots,y_n)$ and $I=\set{n-1}$. If $y_{i}=l+x_{i}$ for $i=1,2,\ldots,n$, for some $l\in \mathbb{F}_{2^r}^{*}$ then ${V_2}{V_1}^{-1}$ is a lower triangular matrix whose nonzero elements are determined by powers of $l$. Also, $V_1^{-1}V_2 ~(=V_2^{-1}V_1)$ is an involutory matrix.
\end{theorem}

\begin{remark}
	$V_1^{-1}V_2$ is involutory if and only if $V_1^{-1}V_2=V_2^{-1}V_1$
\end{remark}

\noindent Now, by applying Theorem~\ref{Th_GVand_1_MDS} and Theorem~\ref{Th_GVand_1_inv_MDS}, we can find involutory MDS matrices over $\mathbb{F}_{2^r}$. To force the resulting matrix $V_1^{-1}V_2$ to be involutory, we must establish a symmetric algebraic relationship between the elements of $V_1$ and $V_2$. In the following corollary, we achieve this by applying the strict constraint $x_{n+i} = l + x_i$ for some constant $l$.

\begin{corollary}\label{Corollary_GVand_Inv_MDS}
	Let $V_1 = V_{\perp}({\bf x};I)$ and $V_2 = V_{\perp}({\bf y};I)$ be two generalized Vandermonde matrices of even order over $\mathbb{F}_{2^r}$ with ${\bf x}=(x_1,x_2,\ldots,x_n)$, ${\bf y}=(x_{n+1},x_{n+2},\ldots,x_{2n})$ and $I=\set{n-1}$. If $V_1$ and $V_2$ satisfy the three properties:
	\begin{enumerate}[(i)]
		\item $x_{n+i}=l+x_{i}$ for $i=1,2,\ldots,n$, for some $l\in \mathbb{F}_{2^r}^{*}$,
		\item $x_{i} \neq x_{j}$ for $i\neq j$ where $1 \leq i, j \leq 2n$, and
		\item $\sum_{i=1}^{n}x_{r_i}\neq 0$ for all $R=\set{r_1,r_2,\ldots,r_n}\subset E$, where $E=\set{1,2,\ldots,2n}$,
	\end{enumerate}
	then $V_1^{-1}V_2$ is an involutory MDS matrix.
\end{corollary}

\begin{example}\label{Example_GVand_1_inv_MDS}
	Let $\alpha$ be a primitive element of $\FF_{2^8}$ and a root of $x^8+x^7+x^6+x+1$. Let $l=\alpha$, ${\bf x}=\left( 1, \alpha, \alpha^{2}, \alpha^{3}, \alpha^{4}, \alpha^{5} \right)$, and ${\bf y}=\left( \alpha + 1, 0, \alpha^{2} + \alpha, \alpha^{3} + \alpha, \alpha^{4} + \alpha, \alpha^{5} + \alpha \right)$. Consider the generalized Vandermonde matrices $V_1 = V_{\perp}({\bf x};I)$ and $V_2 = V_{\perp}({\bf y};I)$ with $I=\set{5}$. Then it can be checked that both matrices $V_1$ and $V_2$ satisfy the conditions of Corollary~\ref{Corollary_GVand_Inv_MDS}. Therefore, the matrix
	\begin{align*}
		V_1^{-1}V_2&=
		\begin{bmatrix}
			\alpha^{113} ~&~ \alpha^{33}  ~&~ \alpha^{227} ~&~ \alpha^{93}  ~&~ \alpha^{16} ~&~ \alpha^{174}\\
			\alpha^{63}  ~&~ \alpha^{107} ~&~ \alpha^{186} ~&~ \alpha^{149} ~&~ \alpha^{175} ~&~ \alpha^{10}\\
			\alpha^{105} ~&~ \alpha^{34}  ~&~ \alpha^{116} ~&~ \alpha^{97}  ~&~ \alpha^{198} ~&~ \alpha^{197}\\
			\alpha^{40}  ~&~ \alpha^{66}  ~&~ \alpha^{166} ~&~ \alpha^{43}  ~&~ \alpha^{213} ~&~ \alpha^{52}\\
			\alpha^{136} ~&~ \alpha^{10}  ~&~ \alpha^{185} ~&~ \alpha^{131} ~&~ \alpha^{5}   ~&~ \alpha^{136}\\
			\alpha^{211} ~&~ \alpha^{17}  ~&~ \alpha^{101} ~&~ \alpha^{142} ~&~ \alpha^{53}  ~&~ \alpha^{56}
		\end{bmatrix}
	\end{align*}
	is an involutory MDS matrix.
\end{example}

\begin{remark}
	It is worth mentioning that the above result may not be true for odd order matrices. For example, consider the $3\times 3$ generalized Vandermonde matrices $V_1 = V_{\perp}({\bf x};I)$ and $V_2 = V_{\perp}({\bf y};I)$ with $I=\set{2}$, ${\bf x}=(1, \alpha, \alpha^2)$ and ${\bf y}=(1+\alpha^3, \alpha+\alpha^3, \alpha^2+\alpha^3)$, where $\alpha$ is a primitive element of $\FF_{2^4}$ and a root of $x^4+x+1$. Then it can be checked that the matrices $V_1$ and $V_2$ satisfy the conditions in Corollary~\ref{Corollary_GVand_Inv_MDS}. However, the matrix
	\begin{align*}
		V_1^{-1}V_2&=
		\begin{bmatrix}
			\alpha^{10} ~&~ \alpha^{13} ~&~ \alpha^{1}\\
			\alpha^{3}  ~&~ \alpha^{11} ~&~ \alpha^{11}\\
			\alpha^{11} ~&~ \alpha^{1}  ~&~ \alpha^{13}
		\end{bmatrix}
	\end{align*}
	is not an involutory matrix.
\end{remark}

\noindent By applying Theorem~\ref{Th_GVand_1_NMDS} and Theorem~\ref{Th_GVand_1_inv_MDS}, we can systematically construct involutory NMDS matrices over $\mathbb{F}_{2^r}$ using the following approach. To force the resulting matrix $V_1^{-1}V_2$ to be involutory, we must establish a symmetric algebraic relationship between the elements of $V_1$ and $V_2$. In the following corollary, we achieve this by applying the strict constraint $x_{n+i} = l + x_i$ for some constant $l$. Notably, this work presents the first direct construction method for involutory NMDS matrices over finite fields, providing a concrete framework for generating such matrices rather than relying on exhaustive search.

\begin{corollary}\label{Corollary_GVand_Inv_NMDS}
	Let $V_1 = V_{\perp}({\bf x};I)$ and $V_2 = V_{\perp}({\bf y};I)$ be two generalized Vandermonde matrices of even order over $\mathbb{F}_{2^r}$ with ${\bf x}=(x_1,x_2,\ldots,x_n)$, ${\bf y}=(x_{n+1},x_{n+2},\ldots,x_{2n})$ and $I=\set{n-1}$. If $V_1$ and $V_2$ satisfy the three properties:
	\begin{enumerate}[(i)]
		\item $x_{n+i}=l+x_{i}$ for $i=1,2,\ldots,n$, for some $l\in \mathbb{F}_{2^r}^{*}$,
		\item $x_{i} \neq x_{j}$ for $i\neq j$ where $1 \leq i, j \leq 2n$, and
		\item $\sum_{i=1}^{n}x_{i}\neq 0$, $\sum_{i=1}^{n}x_{n+i}\neq 0$  and $\sum_{i=1}^{n}x_{r_i}= 0$ for some other $R=\set{\splitatcommas{r_1,r_2,\ldots,r_n}}\subset E$, where $E=\set{1,2,\ldots,2n}$,
	\end{enumerate}
	then $V_1^{-1}V_2$ is an involutory NMDS matrix.
\end{corollary}

\begin{example}\label{Example_GVand_1_inv_NMDS}
	Let $\alpha$ be a primitive element of $\FF_{2^4}$ and a root of $x^4+x+1$. Let $l=1$, ${\bf x}=(1,\alpha,\alpha^2,\alpha^3)$, and ${\bf y}=(0,1+\alpha,1+\alpha^2,1+\alpha^3)$. Consider the generalized Vandermonde matrices $V_1 = V_{\perp}({\bf x};I)$ and $V_2 = V_{\perp}({\bf y};I)$ with $I=\set{3}$. Then it can be checked that both matrices $V_1$ and $V_2$ satisfy the conditions of Corollary~\ref{Corollary_GVand_Inv_NMDS}. Therefore, the matrix
	\begin{equation*}
		\begin{aligned}
			V_1^{-1}V_2&=
			\begin{bmatrix}
				\alpha^{9} ~&~ \alpha^{7}  ~&~ \alpha^{7} ~&~ \alpha^{7}\\
				\alpha^{3} ~&~ \alpha^{14} ~&~ \alpha^{3} ~&~ \alpha^{3}\\
				\alpha^{10} ~&~ \alpha^{10} ~&~ \alpha^{5} ~&~ \alpha^{10}\\
				\alpha^{2} ~&~ \alpha^{2}  ~&~ \alpha^{2} ~&~ \alpha^{8}
			\end{bmatrix}
		\end{aligned}
	\end{equation*}
	is an involutory NMDS matrix.
\end{example}
\noindent We will now focus on using the generalized Vandermonde matrices $V_{\perp}({\bf x};I)$ with $I=\{1\}$ for constructing MDS and NMDS matrices. Similar to the case of generalized Vandermonde matrices with $I=\{n-1\}$, these matrices alone may not be MDS or NMDS (as shown in Example~\ref{Example_GVand_2_not_MDS_NMDS}). Therefore, we will consider two generalized Vandermonde matrices for the construction of MDS and NMDS matrices.

\begin{example}\label{Example_GVand_2_not_MDS_NMDS}
	Consider the generalized Vandermonde matrix $V_{\perp}({\bf x};I)$ with ${\bf x}=(1,\alpha,\alpha^5,\alpha^{10})$ and $I=\set{1}$
	\begin{center}
		$V_{\perp}({\bf x};I)=
		\begin{bmatrix}
			1 & 1 & 1 & 1 \\
			1 & \alpha^2 & \alpha^{10} & \alpha^{20}\\
			1 & \alpha^3 & \alpha^{15} & \alpha^{30}\\
			1 & \alpha^4 & \alpha^{20} & \alpha^{40}\\
		\end{bmatrix}$, 
	\end{center}
	where $\alpha$ is a primitive element of the finite field $\mathbb{F}_{2^4}$ constructed by the polynomial $x^4+x+1$. But it contains a singular $2\times 2$ submatrix 
	$\begin{bmatrix}
		1 & 1 \\
		\alpha^{15} & \alpha^{30}\\
	\end{bmatrix}$. Hence, $V_{\perp}({\bf x};I)$ is not an MDS matrix. Also, it can be checked that $V_{\perp}({\bf x};I)$ is not an NMDS matrix.
\end{example}
We can prove the following theorem using Corollary~\ref{Corollary_GVand_det2}, which is similar to the proof of Theorem~\ref{Th_GVand_1_MDS}. The proof is omitted for brevity.

\begin{theorem}\label{Th_GVand_2_MDS}
	Let $V_1 = V_{\perp}({\bf x};I)$ and $V_2 = V_{\perp}({\bf y};I)$ be two generalized Vandermonde matrices with ${\bf x}=(x_1,x_2,\ldots,x_n)$, ${\bf y}=(x_{n+1},x_{n+2},\ldots,x_{2n})$ and $I=\set{1}$. Suppose that the elements $x_i$ are $2n$ distinct nonzero elements from $\mathbb{F}_q$, and $\sum_{i=1}^{n}x_{r_i}^{-1}\neq 0$ for all $R=\set{r_1,r_2,\ldots,r_n}\subset E$, where $E=\set{1,2,\ldots,2n}$. Then the matrices $V_1^{-1}V_2$  and $V_2^{-1}V_1$ are such that any square submatrix of them is nonsingular; hence, they are MDS matrices.
\end{theorem}

\begin{example}\label{Example_GVand_2_MDS}
	Consider the generalized Vandermonde matrices $V_1 = V_{\perp}({\bf x};I)$ and $V_2 = V_{\perp}({\bf y};I)$ with ${\bf x}=(1,\alpha,\alpha^2,\alpha^3)$, ${\bf y}=(\alpha^4,\alpha^5,\alpha^6,\alpha^7)$ and $I=\set{1}$, where $\alpha$ is a primitive element of $\FF_{2^8}$ and a root of $x^8+x^7+x^6+x+1$. It can be verified that $V_1$ and $V_2$ satisfy the conditions in Theorem~\ref{Th_GVand_2_MDS}. Therefore, the matrices
	\begin{equation*}
		\begin{aligned}
			V_1^{-1}V_2&=
			\begin{bmatrix}
				\alpha^{9} & \alpha^{43} & \alpha^{252} & \alpha^{70}\\
				\alpha^{232} & \alpha^{68} & \alpha^{92} & \alpha^{168}\\
				\alpha^{206} & \alpha^{213} & \alpha^{93} & \alpha^{230}\\
				\alpha^{34} & \alpha^{243}  & \alpha^{61} & \alpha^{152}
			\end{bmatrix}~\text{and}~
			V_2^{-1}V_1=
			\begin{bmatrix}
				\alpha^{24} & \alpha^{137} & \alpha^{42} & \alpha^{223}\\
				\alpha^{66} & \alpha^{14} & \alpha^{88} & \alpha^{197}\\
				\alpha^{187} & \alpha^{35} & \alpha^{50} & \alpha^{25}\\
				\alpha^{128} & \alpha^{33} & \alpha^{214} & \alpha^{246}
			\end{bmatrix}
		\end{aligned}
	\end{equation*}
	are MDS matrices.
\end{example}

\noindent In the following theorem, we discuss a new construction of NMDS matrices from the generalized Vandermonde matrices with $I=\set{1}$. The proof can be derived using Corollary~\ref{Corollary_GVand_det2}, following a similar approach to that of Theorem~\ref{Th_GVand_1_NMDS}. We state the result without providing a proof.

\begin{theorem}\label{Th_GVand_2_NMDS}
	Let $V_1 = V_{\perp}({\bf x};I)$ and $V_2 = V_{\perp}({\bf y};I)$ be two generalized Vandermonde matrices with ${\bf x}=(x_1,x_2,\ldots,x_n)$, ${\bf y}=(x_{n+1},x_{n+2},\ldots,x_{2n})$ and $I=\set{1}$. Assume that the elements $x_i$ are $2n$ distinct nonzero elements from $\mathbb{F}_q$ such that $\sum_{i=1}^{n}x_{i}^{-1}\neq 0$, $\sum_{i=1}^{n}x_{n+i}^{-1}\neq 0$ and $\sum_{i=1}^{n}x_{r_i}^{-1}= 0$ for some other $R=\set{r_1,r_2,\ldots,r_n}\subset E$, where $E=\set{1,2,\ldots,2n}$. Then the matrices $V_1^{-1}V_2$  and $V_2^{-1}V_1$ are NMDS matrices.
\end{theorem}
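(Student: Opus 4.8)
The plan is to follow the architecture of the proof of Theorem~\ref{Th_GVand_1_NMDS} almost verbatim, replacing Corollary~\ref{Corollary_GVand_det1} by Corollary~\ref{Corollary_GVand_det2} throughout and adjusting for the different row pattern of $V_{\perp}({\bf x};\set{1})$. First I would set $U=[V_1~|~V_2]$, observe that the hypotheses $\sum_{i=1}^{n}x_i^{-1}\neq 0$ and $\sum_{i=1}^{n}x_{n+i}^{-1}\neq 0$ make $V_1$ and $V_2$ nonsingular by Corollary~\ref{Corollary_GVand_det2}, and form $G=V_1^{-1}U=[I~|~A]$ with $A=V_1^{-1}V_2$. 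Since $U=V_1G$, Lemma~\ref{Lemma_AG_another_gen_matrix} shows that $U$ generates the same code $\mathcal{C}$, so it suffices to verify the three conditions of Lemma~\ref{Lemma_generator_matrix_NMDS} directly on $U$. The only structural difference from the $I=\set{n-1}$ case is that, for $I=\set{1}$, the rows of any submatrix $U[R]$ carry the exponents $0,2,3,\ldots,n$ (exponent $1$ is omitted).

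For condition~$(i)$, given any $n-1$ columns indexed by $R=\set{r_1,\ldots,r_{n-1}}$, I would \emph{not} use the top rows (as was done in Theorem~\ref{Th_GVand_1_NMDS}), since those carry exponents $0,2,3,\ldots,n-1$ and form another type-$\set{1}$ block whose determinant could vanish. Instead I would take the bottom $n-1$ rows, whose exponents $2,3,\ldots,n$ are consecutive; pulling a factor $x_{r_i}^{2}$ out of each column reduces this $(n-1)\times(n-1)$ block to $vand(x_{r_1},\ldots,x_{r_{n-1}})$, which is nonsingular because the $x_{r_i}$ are distinct and nonzero. Hence every $n-1$ columns of $U$ are independent. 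Condition~$(ii)$ is immediate: the $n$ columns indexed by the distinguished set $R$ with $\sum_{i=1}^{n}x_{r_i}^{-1}=0$ form a copy of $V_{\perp}({\bf x};\set{1})$, and Corollary~\ref{Corollary_GVand_det2} gives its determinant the factor $\sum_{i=1}^{n}x_{r_i}^{-1}=0$, so those columns are dependent.

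The main work, as in Theorem~\ref{Th_GVand_1_NMDS}, is condition~$(iii)$. I would argue by contradiction: if some $n+1$ columns $\set{r_1,\ldots,r_{n+1}}$ had rank $<n$, then every $n\times n$ subblock would be singular. Each such subblock is again a type-$\set{1}$ generalized Vandermonde, and in Corollary~\ref{Corollary_GVand_det2} the leading factor $\prod x_{r_i}$ is nonzero because all entries are nonzero; thus singularity forces the reciprocal-sum factor to vanish, i.e.\ $\sum_{i\neq j}x_{r_i}^{-1}=0$ for each omitted index $j$. Writing $y_i=x_{r_i}^{-1}$, this is exactly the system $MY=0$ with the same hollow all-ones matrix $M$ of order $n+1$ appearing in Theorem~\ref{Th_GVand_1_NMDS}, for which $\det(M)=(-1)^n n$.

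Finally I would split on the characteristic $p$ of $\mathbb{F}_q$ exactly as before. If $p\nmid n$, then $M$ is invertible and $Y=0$, but $y_i=x_{r_i}^{-1}\neq 0$, a contradiction. If $p\mid n$, then $\det(M)=0$ while the minor obtained by deleting the first row and column equals $(-1)^{n-1}(n-1)\neq 0$, so $M$ has rank $n$ and its kernel is spanned by $[1,\ldots,1]^T$; this forces all $y_i$, hence all $x_{r_i}^{-1}$ and all $x_{r_i}$, to coincide, contradicting distinctness. This establishes condition~$(iii)$, so $U$ (and therefore $G=[I~|~A]$) generates a $[2n,n]$ NMDS code, proving that $A=V_1^{-1}V_2$ is NMDS; the argument for $V_2^{-1}V_1$ is identical. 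The only genuinely new point over Theorem~\ref{Th_GVand_1_NMDS} is the remark that the extra factor $\prod x_i$ in Corollary~\ref{Corollary_GVand_det2} is harmless because every $x_i\neq 0$, so all singularity questions are again governed solely by a reciprocal-sum vanishing; I expect condition~$(iii)$, and in particular checking that this reduction to the same matrix $M$ is legitimate, to be the main (though mild) obstacle.
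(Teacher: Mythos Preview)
Your proposal is correct and matches the paper's own approach exactly: the paper omits the proof and simply says it ``can be derived using Corollary~\ref{Corollary_GVand_det2}, following a similar approach to that of Theorem~\ref{Th_GVand_1_NMDS}.'' You have carried this out faithfully, including the two adjustments that are genuinely needed---selecting the bottom $n-1$ rows (exponents $2,\ldots,n$) for condition~$(i)$ so that a common factor $x_{r_i}^2$ reduces the block to an ordinary Vandermonde, and noting that the extra factor $\prod x_{r_i}$ in Corollary~\ref{Corollary_GVand_det2} is nonzero so that singularity is governed solely by the reciprocal sum, which feeds the substitution $y_i=x_{r_i}^{-1}$ into the same matrix~$M$ as before.
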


\begin{remark}
	As in Theorem~\ref{Th_GVand_1_NMDS}, by Corollary~\ref{Corollary_GVand_det2}, the assumption $\sum_{i=1}^{n}x_{i}^{-1}\neq 0$ and $\sum_{i=1}^{n}x_{n+i}^{-1}\neq 0$ in Theorem~\ref{Th_GVand_2_NMDS} is necessary to ensure the nonsingularity of $V_1$ and $V_2$.
\end{remark}

\begin{example}\label{Example_GVand_2_NMDS}
	Consider the generalized Vandermonde matrices $V_1 = V_{\perp}({\bf x};I)$ and $V_2 = V_{\perp}({\bf y};I)$ with ${\bf x}=(1,\alpha,\alpha^2,\alpha^3)$, ${\bf y}=(\alpha^4,\alpha^5,\alpha^6,\alpha^7)$ and $I=\set{1}$, where $\alpha$ is a primitive element of $\FF_{2^4}$ and a root of $x^4+x+1$. It is easy to check that each $x_i$ is distinct and $1+\alpha^{-1}+\alpha^{-2}+\alpha^{-7}=0$. Therefore, the matrices
	\begin{equation*}
		\begin{aligned}
			V_1^{-1}V_2&=
			\begin{bmatrix}
				\alpha^{9} & \alpha^{5} & \alpha^{2} & \alpha^{13}\\
				\alpha^{7} & \alpha & \alpha^{10} & \alpha^{9}\\
				\alpha^{11} & 0 & 1 & \alpha^{5}\\
				\alpha^{11} & \alpha^{8} & \alpha^{4} & 0
			\end{bmatrix}~\text{and}~
			V_2^{-1}V_1=
			\begin{bmatrix}
				\alpha^{14} & \alpha^{11} & \alpha^{9} & \alpha^{13}\\
				0 & \alpha^{4} & \alpha^{8} & \alpha^{2}\\
				\alpha^{6} & \alpha^{13} & \alpha^{13} & \alpha^{2}\\
				\alpha^{2} & 1 & \alpha^{4} & \alpha^{6}
			\end{bmatrix}
		\end{aligned}
	\end{equation*}
	are NMDS matrices.
\end{example}


\noindent Now, we consider generalized Vandermonde matrices $V_{\perp}({\bf x}; I)$, where $I$ has more than one element, specifically, we consider $V_{\perp}({\bf x}; I)$ with $I=\set{1,n}$ for providing a new direct construction for MDS matrices. The proof follows a similar approach to that of Theorem~\ref{Th_GVand_1_MDS} and can be derived using Corollary~\ref{Corollary_GVand_det3}. The proof is omitted for brevity.

\begin{theorem}\label{Th_GVand_3_MDS}
	Let $V_1 = V_{\perp}({\bf x};I)$ and $V_2 = V_{\perp}({\bf y};I)$ be two generalized Vandermonde matrices with ${\bf x}=(x_1,x_2,\ldots,x_n)$, ${\bf y}=(x_{n+1},x_{n+2},\ldots,x_{2n})$ and $I=\set{1,n}$. The elements $x_i$ are $2n$ distinct nonzero elements from $\mathbb{F}_q$, and $(\sum_{i=1}^{n}x_{r_i})(\sum_{i=1}^{n}x_{r_i}^{-1})-1\neq 0$ for all $R=\set{r_1,r_2,\ldots,r_n}\subset E$, where $E=\set{1,2,\ldots,2n}$. Then the matrices $V_1^{-1}V_2$  and $V_2^{-1}V_1$ are such that any square submatrix of them is nonsingular; hence, they are MDS matrices.
\end{theorem}

\begin{example}\label{Example_GVand_3_MDS}
	Consider the generalized Vandermonde matrices $V_1 = V_{\perp}({\bf x};I)$ and $V_2 = V_{\perp}({\bf y};I)$ with ${\bf x}=(1,\alpha,\alpha^2,\alpha^3)$, ${\bf y}=(\alpha^4,\alpha^5,\alpha^6,\alpha^7)$ and $I=\set{1,4}$, where $\alpha$ is a primitive element of $\FF_{2^4}$ and a root of $x^4+x+1$. 
	It can be verified that $V_1$ and $V_2$ satisfy the conditions in Theorem~\ref{Th_GVand_3_MDS}. Therefore, the matrices 
	\begin{equation*}
		\begin{aligned}
			V_1^{-1}V_2&=
			\begin{bmatrix}
				\alpha^{10} & \alpha^{2} & \alpha^{2} & \alpha^{14}\\
				\alpha^{12} & \alpha^{2} & \alpha^{10} & \alpha^{5}\\
				\alpha  & \alpha^{9} & 1 &1\\
				\alpha^{7}  & \alpha^{7} & \alpha^{4} & \alpha^{12}
			\end{bmatrix}~\text{and}~V_2^{-1}V_1=
			\begin{bmatrix}
				\alpha^{7} & \alpha^{4} & \alpha^{12} & \alpha^{2}\\
				\alpha^{5} & \alpha^{10} & \alpha^{9} & \alpha^{6}\\
				\alpha^{5} & 1   & \alpha^{12} & \alpha^{12}\\
				\alpha^{9} & \alpha^{2} & \alpha^{7}  & \alpha^{5}
			\end{bmatrix}
		\end{aligned}
	\end{equation*}
	are MDS matrices.
\end{example}

\begin{remark}
	It is important to note that in Theorem~\ref{Th_GVand_1_MDS} and Theorem~\ref{Th_GVand_1_NMDS}, at most one $x_i$ may be zero for $V_1^{-1}V_2$ and $V_2^{-1}V_1$ to be MDS or NMDS. However, in Theorem~\ref{Th_GVand_2_MDS}, Theorem~\ref{Th_GVand_2_NMDS}, and Theorem~\ref{Th_GVand_3_MDS}, each $x_i$ needs to be nonzero; otherwise, the term $x_i^{-1}$ in the conditions will not be defined.
\end{remark}

\begin{remark}
	We have presented a method for constructing involutory MDS and NMDS matrices using generalized Vandermonde matrices $V_{\perp}({\bf x}; I)$ with $I=\set{n-1}$. However, we have not been able to determine the conditions for constructing involutory MDS and NMDS matrices from generalized Vandermonde matrices with $I=\set{1}$ and $I=\set{1,n}$.
\end{remark}

\begin{remark}
	This paper does not consider generalized Vandermonde matrices $V_{\perp}({\bf x}; I)$ with sets $I$ other than $\set{1}$, $\set{n-1}$, or $\set{1,n}$, or those with size $|I| > 2$. This is because the conditions for being MDS or NMDS matrices become more complicated. However, it is possible to find additional direct constructions of MDS and NMDS matrices by using Theorem~\ref{Th_GVand_discon}.
\end{remark}

\noindent Although we cannot rule out the possibility that our proposed methods may generate an MDS matrix that can also be obtained from a classical construction (e.g., Cauchy or Vandermonde-based constructions), any such overlap would be purely coincidental and would arise solely from the bounded and discrete nature of finite fields. The core novelty of our contribution lies in the underlying algebraic framework used to generate these matrices, which is fundamentally different from existing approaches. In particular, the dependencies in our constructions are not determined solely by simple difference products, as in standard Vandermonde-based constructions, but are also governed by symmetric polynomials, as shown in Corollaries~\ref{Corollary_GVand_det1}, \ref{Corollary_GVand_det2}, and \ref{Corollary_GVand_det3}. Consequently, our theoretical derivation does not naturally reduce to the equivalence class of Cauchy matrices (as shown in \cite[Theorem 5.1]{MDS_Survey}). Therefore, even if a specific matrix produced by our formulas happens to coincide with one obtainable by a known construction, the mathematical machinery we introduce still provides a novel and structurally distinct approach to MDS matrix generation and broadens the available theoretical design space. Table~\ref{tab:literature_comparison} provides a comprehensive comparison with the known literature of direct constructions.

\begin{table}[htpb]
\centering
\caption{Comparison of Proposed Direct Constructions with Existing Literature}
\label{tab:literature_comparison}
\renewcommand{\arraystretch}{1.3} 
\resizebox{\textwidth}{!}{%
\begin{tabular}{|p{3.5cm}|p{4.6cm}|p{5.5cm}|}
\hline
\textbf{Matrix Classification} & \textbf{Existing Literature} & \textbf{Proposed Construction} \\ \hline
\textbf{Nonrecursive MDS} & Cauchy and Vandermonde matrix based constructions, also from a $[2n,n]$ MDS code (For a comprehensive overview on those constructions we refer~\cite{MDS_Survey}) & Generalized Vandermonde matrices $V_{\perp}({\bf x}; I)$ with $I=\set{1}$ or $\set{n-1}$ or $\set{1,n}$ (Theorems~\ref{Th_GVand_1_MDS}, \ref{Th_GVand_2_MDS}, and \ref{Th_GVand_3_MDS}) \\ \hline
\textbf{Nonrecursive NMDS} & From a $[2n,n]$ NMDS code (e.g., \cite{Huang2021,NMDS_code_2022,NMDS_code_2022_2}) & Generalized Vandermonde matrices $V_{\perp}({\bf x}; I)$ with $I=\set{1}$ or $\set{n-1}$ (Theorems~\ref{Th_GVand_1_NMDS} and \ref{Th_GVand_2_NMDS}) \\ \hline
\textbf{Recursive MDS} & Companion matrix based constructions, shortened BCH codes, Gabidulin codes (e.g., \cite{Augot2014,Berger2013,MDS_Survey,GuptaPV17_1,GuptaPV17_2,GuptaPV19}) & Companion matrix based construction (Theorem~\ref{Th_MDS_New_Construction}) \\ \hline
\textbf{Recursive NMDS} & No prior direct construction is known & Companion matrix based construction (Theorem~\ref{Th_RNMDS_ConsI(b)} and Lemma~\ref{Lemma_RNMDS_ConsI(c)}) \\ \hline
\textbf{Involutory MDS} & Cauchy and Vandermonde matrix based constructions (For a comprehensive overview on those constructions we refer~\cite{MDS_Survey}) & Generalized Vandermonde matrices $V_{\perp}({\bf x}; I)$ with $I=\set{1}$ (Corollary~\ref{Corollary_GVand_Inv_MDS}) \\ \hline
\textbf{Involutory NMDS} & No prior direct construction is known & Generalized Vandermonde matrices $V_{\perp}({\bf x}; I)$ with $I=\set{1}$ (Corollary~\ref{Corollary_GVand_Inv_NMDS}) \\ \hline
\end{tabular}%
}
\end{table}

\section{Conclusion}\label{Section_Conclusion_Direct_Cons}
There has been significant research in the literature on the direct construction of MDS matrices using both recursive and nonrecursive methods. However, research on NMDS matrices has been limited in the literature, and there is currently no direct construction method available for them in a recursive approach. This paper addresses this gap by presenting novel direct construction techniques for NMDS matrices in the recursive setting. By employing generalized Vandermonde matrices, we provide a new approach for constructing MDS and NMDS matrices. We also propose a method for constructing involutory MDS and
NMDS matrices using generalized Vandermonde matrices. Moreover, the paper provides proof for some commonly referenced results related to the NMDS codes. Overall, this work provides valuable tools for constructing MDS and NMDS matrices and advances the current state of research in this area. As a promising direction for future work, the theoretical foundations established here can serve as a guide toward efficient implementations. Specifically, applying advanced global optimization heuristics to these newly discovered matrix classes to evaluate concrete implementation metrics, such as area and latency, will be a valuable next step for deploying these structures in lightweight cryptographic primitives.

\medskip

\end{document}